\pgfplotsset{width=7cm,compat=1.9}
\newcommand\smallo{
  \mathchoice
    {{\scriptstyle\mathcal{O}}}
    {{\scriptstyle\mathcal{O}}}
    {{\scriptscriptstyle\mathcal{O}}}
    {\scalebox{.7}{$\scriptscriptstyle\mathcal{O}$}}
  }
\DeclareMathOperator*{\argmin}{arg\,min}
\let\epsilon=\varepsilon %
\newcommand{\ST}{\textsc{Undirected Steiner Tree}\xspace}
\newcommand{\SF}{\textsc{Steiner Forest}\xspace}
\newcommand{\DST}{\textsc{Directed Steiner Tree}\xspace}
\newcommand{\MDST}{\textsc{Multi-Rooted Directed Steiner Tree}\xspace}
\newcommand{\setcover}{\textsc{Set Cover}\xspace}
\newcommand{\FL}{\textsc{Facility Location}\xspace}
\newcommand{\GST}{\textsc{Group Steiner Tree}\xspace}
\newcommand{\PG}{\textsc{Projection Games Conjecture}\xspace}
\newcommand{\MIS}{\textsc{Maximum Independent Set}\xspace}
\newcommand{\Pebbling}{\textsc{Pebbling}\xspace}
\def\opt{\mathop{\rm opt}\nolimits}
\def\OPT{\mathop{\rm OPT}\nolimits}
\def\cost{\mathop{\rm cost}\nolimits}
\def\contract{\mathop{\rm contract}\nolimits}
\def\dst{\mathop{\rm DST}\nolimits}
\def\np{\mathop{\rm NP}\nolimits}
\def\p{\mathop{\rm P}\nolimits}
\def\dtime{\mathop{\rm DTIME}\nolimits}
\def\polylog{\mathop{\rm polylog}\nolimits}
\def\poly{\mathop{\rm poly}\nolimits}
\def\zptime{\mathop{\rm ZPTIME}\nolimits}
\newtheorem{theorem}{Theorem} 
\newtheorem{lemma}[theorem]{Lemma} %
\newtheorem{corollary}[theorem]{Corollary} %
\newtheorem{definition}[theorem]{Definition} %
\newenvironment{pproof}[1]{\noindent{\textbf{Proof of #1.}}}{\hfill\rule{2mm}{2mm}}
\begin{document}

\title{An $O(\log k)$-Approximation for Directed Steiner Tree in Planar Graphs}


\author[]{Zachary Friggstad\thanks{Supported by an NSERC Discovery Grant and NSERC Discovery Accelerator Supplement Award.} }
\author[]{Ramin Mousavi}

\affil[]{Department of Computing Science, University of Alberta, Edmonton, Canada.
\authorcr
  \{\tt zacharyf@ualberta.ca, mousavih@ualberta.ca\}}

\date{}

\maketitle

\begin{abstract}
We present an $O(\log k)$-approximation for both the edge-weighted and node-weighted versions of \DST in planar graphs where $k$ is the number of terminals. We extend our approach to \MDST\footnote{In general graphs \MDST and \DST are easily seen to be equivalent but in planar graphs this is not the case necessarily.}, in which we get an $O(R+\log k)$-approximation for planar graphs for where $R$ is the number of roots.
\end{abstract}

\sloppy
\section{Introduction}\label{sec: intro}

In the \DST (DST) problem, we are given a directed graph $G=(V,E)$ with edge costs $c_e\geq 0, e\in E$, a root node $r\in V$, and a collection of terminals $X\subseteq V\setminus\{r\}$. The nodes in $V\setminus(X\cup \{r\})$ are called {\em Steiner} nodes. The goal is to find a minimum cost subset $F\subseteq E$ such that there is an $r-t$ directed path (dipath for short) using only edges in $F$ for every terminal $t\in X$. Note any feasible solution that is inclusion-wise minimal must be an arborescence rooted at $r$, hence the term ``tree''. Throughout, we let $n:=|V|$ and $k:=|X|$.

One key aspect of DST lies in the fact that it generalizes many other important problems, e.g. \setcover, (non-metric, multilevel) \FL, and \GST. Halperin and Krauthgamer \cite{halperin2003polylogarithmic} show \GST cannot be approximated within $O(\log^{2-\epsilon}n)$ for any $\epsilon>0$  unless $\np\subseteq\dtime{(n^{\polylog{(n)}})}$ and therefore the same result holds for DST. 

Building on a height-reduction technique of Calinescu and Zelikovsky \cite{calinescu,zelikovsky1997series}, Charikar et al. give the best approximation for DST which is an $O(k^\epsilon)$-approximation for any constant $\epsilon > 0$ \cite{charikar1999approximation} and also an $O(\log^3 k)$-approximation in $O(n^{{\rm polylog}(k)})$ time (quasi-polynomial time).
This was recently improved by Grandoni, Laekhanukit, and Li \cite{grandoni2019log2}, who give a quasi-polynomial time $O(\frac{\log^2 k}{\log\log k})$-approximation factor for DST. They also provide a matching lower bound in that no asymptotically-better approximation is possible even for quasi-polynomial time algorithms, unless either the \PG fails to hold or $\np \subseteq \zptime(2^{n^\delta})$ for some $0 < \delta < 1$.

The undirected variant of DST (i.e., \ST) is better understood.
A series of papers steadily improved over the simple 2-approximation \cite{zelikovsky199311, karpinski1997new, promel2000new, robins2005tighter} culminating in a $\ln{4}+\epsilon$ for any constant $\epsilon>0$ \cite{byrka2013steiner}.
Bern and Plassmann \cite{bern1989steiner} showed that unless $\p=\np$ there is no approximation factor better than $\frac{96}{95}$ for \ST.

Studying the complexity of network design problems on restricted metrics such as planar graphs and more generally, graphs that exclude a fixed minor has been a fruitful research direction. For example, \cite{borradaile2009n} gives the first {\em polynomial time approximation scheme} (PTAS) for \ST on planar graphs and more generally \cite{bateni2011approximation} obtains a PTAS for \SF on graphs of bounded-genus. Very recently, Cohen-Addad \cite{cohen2022bypassing} presented a {\em quasi-polynomial time approximation scheme} (QPTAS) for Steiner tree on minor-free graphs.

A clear distinction in the complexity of \ST on planar graphs and general graphs have been established; however, prior to our work we did not know if DST on planar graphs is \say{easier} to approximate than in general graphs. Demaine, Hajiaghayi, and Klein \cite{demaine2014node} show that if one takes a standard flow-based
relaxation for DST in planar graphs and further constraints
the flows to be ``non-crossing'', then the solution
can be rounded to a feasible DST solution while losing only a constant factor in the cost. However, the resulting relaxation is non-convex and, to date, we do not know how to compute
a low-cost, non-crossing flow in polynomial time for DST instances on planar graphs. Recently, in \cite{friggstad2021constant} a constant factor approximation for planar DST was given for quasi-bipartite instances (i.e. no two Steiner nodes are connected by an edge). Though, we remark that the techniques in that paper are quite different than the techniques we use in this paper; \cite{friggstad2021constant} uses a primal-dual algorithm based on a standard LP relaxation whereas
the techniques we use in this paper rely on planar separators.

In this paper, we show DST on planar graphs admits an $O(\log k)$-approximation, while DST on general graphs does not have an approximation factor better than $O(\log^{2-\epsilon}n)$ for any $\epsilon>0$ unless $\np\subseteq\dtime{(n^{\polylog{(n)}})}$.

Our approach is based on planar separators presented by Thorup \cite{thorup2004compact}\footnote{As stated in \cite{thorup2004compact} this separator theorem was implicitly proved in \cite{lipton1979separator}.} which states given an undirected graph $G$ with $n$ vertices, one could find a \say{well-structured} subgraph $F$ such that each connected component of $G\setminus F$ has at most $\frac{n}{2}$ vertices. Well-structured separators are useful in enabling divide-and-conquer approach for some problems, such as \MIS and \Pebbling \cite{lipton1980applications}. Also very recently, Cohen-Addad \cite{cohen2022bypassing} uses the same separator we consider to design QPTASes for $k$-MST and \ST on planar graphs. He also develops a new separator to deal with these problems in minor-free graphs. 

We show the separator theorem of Thorup can be used to obtain a simple logarithmic approximation algorithm for planar DST.

\begin{theorem}\label{thm: planar DST}
There is an $O(\log k)$-approximation for planar \DST, where $k$ is the number of terminals.
\end{theorem}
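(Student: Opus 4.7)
The plan is to design a recursive divide-and-conquer algorithm $\mathcal{A}(X')$ that, given a subset $X' \subseteq X$ of terminals, returns a DST connecting $r$ to $X'$ in $G$. Applying Thorup's shortest-path separator theorem to the underlying undirected graph $G^u$ roughly halves the terminal set at each recursive level; the algorithm will pay $O(\OPT)$ additional cost at each level, and since the recursion depth is $O(\log k)$, the total cost is $O(\log k) \cdot \OPT$.

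For the base case $|X'| = O(1)$, the algorithm buys a shortest $r$-to-$t$ dipath in $G$ for every $t \in X'$; each such dipath has cost at most $\OPT(X')$, so the base case contributes $O(\OPT(X'))$. In the recursive step I would form $G^u$, apply Thorup's theorem (or a terminal-weighted variant) to obtain $O(1)$ shortest paths $P_1,\dots,P_c$ in $G^u$ whose removal leaves each connected component of $G^u$ with at most $|X'|/2$ terminals from $X'$, and arrange for each $P_i$ to have both endpoints in $X' \cup \{r\}$. Because those endpoints are connected within the underlying undirected subgraph of the optimum DST, whose total weight is at most $\OPT(X')$, the $G^u$-length of each $P_i$ is at most $\OPT(X')$. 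The algorithm then buys a shortest $r$-to-$v$ dipath in $G$ for each separator endpoint $v$, contributing $O(\OPT(X'))$ since there are $O(1)$ endpoints and each dipath has cost at most $\OPT(X')$. Finally, for every connected component $C$ of $G^u \setminus \bigcup_i P_i$, the algorithm recursively calls $\mathcal{A}(X' \cap C)$ and takes the union of the returned edges with those purchased at the current level.

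The main obstacle I anticipate is the cost analysis: naively summing per-component $\OPT$ values can overcount structure shared across components, such as dipath prefixes emanating from near $r$. I plan to handle this via a charging scheme based on restricting the optimal arborescence $A^*$ to each component $C$: the sub-forest $A^* \cap C$ spans $X' \cap C$ from a set of ``entry points'' where $A^*$ enters $C$, and $\sum_C \cost(A^* \cap C) \leq \OPT(X')$. This implies that the total recursive burden decomposes additively rather than multiplicatively; combined with the per-level payment of $O(\OPT)$ and the halving of terminals at each level, the telescoping yields $O(\log k) \cdot \OPT$ overall. Correctness follows because every terminal $t \in X'$ either lies on some separator path (and is handled at the current level via a purchased dipath to a nearby endpoint) or lies in some component $C$ and is connected through the recursive call.
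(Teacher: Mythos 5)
Your high-level plan (shortest-path separator + recursion with $O(\log k)$ depth) is the right one and matches the paper in spirit, but several key steps in your proposal do not hold up, and together they hide the actual work.

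First, the cost bound on the separator is not justified. You claim the separator paths $P_1,\dots,P_c$ can be arranged to have endpoints in $X'\cup\{r\}$ and therefore have $G^u$-length at most $\OPT(X')$ because their endpoints are connected inside the optimal tree. Thorup's theorem gives no such control: the three paths it returns are root-to-$v_i$ paths in a \emph{spanning tree you chose}, and the $v_i$ are arbitrary vertices, not terminals. Nothing ties those paths to the optimum's subgraph, so the ``connected within the undirected optimum'' argument fails. The paper instead makes the spanning tree a BFS (shortest-dipath) arborescence rooted at $r$ and, crucially, \emph{preprocesses} the instance by deleting every vertex whose distance from $r$ exceeds an estimate $\widetilde{\opt}$ of $\OPT$; only then is each separator path a shortest $r$-dipath of cost at most $\widetilde{\opt}$. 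Without that preprocessing step (or the guessing/estimation of $\OPT$ that makes it work), the separator has no bound.

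Second, your separator paths live in $G^u$ and need not be dipaths in $G$. Buying a shortest $r$-to-$v$ dipath for each separator \emph{endpoint} $v$ does not make the interior vertices of the undirected separator paths reachable from $r$, so your correctness claim (``a terminal on a separator path is handled via a purchased dipath to a nearby endpoint'') is false as stated. The paper avoids this entirely because its separator consists of directed shortest paths out of $r$, so the whole separator is already reachable.

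Third, the recursion is underspecified in a way that breaks the cost accounting. You recurse on $\mathcal{A}(X'\cap C)$ but never say on which graph; if it is $G$ itself, the per-component optima $\OPT(X'\cap C)$ can sum to much more than $\OPT(X')$ (they can all reuse expensive paths near $r$). Your ``entry point'' charging sketch is the right intuition, but to make it an actual inequality you need to \emph{contract} the separator into $r$ and pass the contracted component $G_{C_i}$ to the recursion; then restricting the optimum to the edge-disjoint $G_{C_i}$'s gives $\sum_i \opt_{I_{C_i}}\le\opt$, which is what drives the telescoping. This contraction-based subinstance definition (the paper's Definition \ref{def: induced subinstances single root} and Lemma \ref{lem: merged soln}) is the missing ingredient.

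Finally, even after fixing all of the above, the naive recursion is quasi-polynomial because of the need to guess the right distance threshold at every level. The paper gets polynomial time by threading an estimate $\widetilde{\opt}$ through the recursion and, at each call, branching on either ``halve $\widetilde{\opt}$'' or ``separate and recurse with the same $\widetilde{\opt}$.'' Your proposal does not address running time at all.
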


We remark that it is trivial to generalize our algorithm to the node-weighted setting of DST in planar graphs. That is, to instances where Steiner nodes $v \in V\setminus (X \cup \{r\})$ have costs $c_v \geq 0$ and the goal is to find the cheapest $S$ of Steiner Nodes such that the graph $G[\{r\} \cup X \cup S]$ contains an $r-t$ dipath for each $t \in X$. Clearly node-weighted DST generalizes edge-weighted DST even in planar graphs settings since we can subdivide an edge with cost $c_e$ and include this cost on the new node. In general graphs, edge-weighted DST generalizes node-weighted DST because a node $v$ with cost $c_v$ can be turned into two nodes $v^+, v^-$ connected by an edge $(v^+, v^-)$ with cost $c_v$; edges entering $v$ now enter $v^+$ and edges exiting $v$ now exit $v^-$. But this operation does not preserve planarity, it is easy to find examples where this results in a non-planar graph.

We also extend our result to multi-rooted case. In \MDST (MR-DST), instead of one root, we are given multiple roots $r_1,\ldots,r_R$ and the set of terminals $X\subseteq V\setminus\{r_1,\ldots,r_R\}$. The goal here is to find a minimum cost subgraph such that every terminal is reachable from one of the roots.

Note that MR-DST on general graphs is equivalent to DST by adding an auxiliary root node $r$ and adding edges $(r,r_i)$ for $1\leq i\leq R$ with zero cost. However, this reduction also does not preserve planarity. We prove our result for MR-DST by constructing a \say{well-structured} separator for the multi-rooted case.

\begin{theorem}\label{thm: multi-rooted planar DST}
There is an $O(R+\log k)$-approximation for planar \MDST, where $R$ is the number of roots and $k$ is the number of terminals.
\end{theorem}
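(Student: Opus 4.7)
The plan is to mirror the divide-and-conquer used to prove Theorem~\ref{thm: planar DST}, but with a planar separator tailored to $R$ roots rather than one. Recall that for the single-root case, one applies Thorup's theorem to obtain a separator consisting of a constant number of shortest paths from $r$, pays $O(\opt)$ per level for the separator, and recurses on balanced components, which yields the $O(\log k)$ factor because recursion depth is $O(\log k)$. Naively repeating that algorithm once for each $r_i$ would give $O(R\log k)$, which is too weak; the target is to make the $R$ contribution additive rather than multiplicative.

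The first step is to pay an initial $O(R)\cdot\opt$ to bind all $R$ roots to a common backbone inside the planar graph. Concretely, for each root $r_i$ I would include a shortest dipath from $r_i$ to a carefully chosen hub region, for instance toward a separator obtained from a multi-source BFS forest rooted at $\{r_1,\ldots,r_R\}$. Each such path can be charged to $\opt$ because $\opt$ itself must supply an $r_i$-to-terminal dipath for every terminal it serves from $r_i$, and a triangle-inequality-style bound akin to the one underlying Theorem~\ref{thm: planar DST} carries over. Once this backbone is prepaid, the instance behaves as a planar single-rooted \DST with a ``virtual'' root realized planarly inside $G$, and we can apply the argument of Theorem~\ref{thm: planar DST} to each recursive subinstance for an additional $O(\log k)\cdot\opt$.

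The second step is to make the recursion work. After selecting the multi-rooted separator $F$, each connected component of $G\setminus F$ is a planar \MDST subinstance whose ``roots'' consist of the portions of the backbone and of $F$ incident to that component; crucially these are already paid for globally, so locally they behave as zero-cost root sets. The standard balanced-separator argument guarantees that each piece has at most half the vertices, so the recursion depth is $O(\log n)$, and refining the analysis to only count recursive levels that actually split terminals cuts this down to $O(\log k)$. Summing a one-time $O(R)\cdot\opt$ for the backbone with $O(\log k)\cdot\opt$ from the recursion yields the claimed $O(R + \log k)$-approximation.

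The main obstacle is engineering the multi-rooted well-structured separator so that the $R$-dependence is incurred exactly once. This requires (i) a multi-source analogue of Thorup's separator whose constant number of shortest-path pieces, together with the $R$ root-to-backbone dipaths, has total cost $O(R)\cdot\opt$ at the top level and only $O(\opt)$ at deeper levels; (ii) verifying that each recursive subinstance inherits a valid root structure from the backbone plus $F$, so the subproblems really reduce to (effectively) single-rooted planar \DST where Theorem~\ref{thm: planar DST} applies; and (iii) handling the directed nature of the problem despite Thorup's theorem being inherently undirected, which I would address by running the separator construction on the underlying undirected graph while arguing, as in Theorem~\ref{thm: planar DST}, that directed shortest paths along the separator still reach the terminals in each subpiece and can be charged to $\opt$.
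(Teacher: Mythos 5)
Your high-level plan (multi-source BFS forest, a Thorup-style separator on the underlying undirected spanning tree, recursion on the weakly connected components) is the right starting point and matches the paper's construction in outline, but the accounting and two of the technical steps have genuine gaps.

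First, the ``one-time $O(R)\cdot\opt$ backbone'' is not actually achievable and is not what makes the additive $R$ term work. After you contract a separator, the subinstances are still multi-rooted: any root $r_j$ whose arborescence $A_j$ was not touched by the separator survives as a root in whichever component contains it, and it has \emph{not} been ``prepaid.'' The paper's analysis is an induction in which each application of the separator touches some $R'$ roots, pays $O(R'\cdot\opt)$ for shortest-dipath pieces inside those $A_i$'s, and in exchange each subinstance has at most $R-R'+1$ roots. The $O(R)$ factor emerges additively precisely because roots are \emph{consumed} over the course of the recursion; it is not a single upfront payment followed by a reduction to single-rooted DST.

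Second, your charging argument for the root-to-hub dipaths does not go through as stated. A root $r_i$ might serve no terminal in $\OPT$, so there is no $r_i$-to-terminal dipath in $\OPT$ to charge against. The paper sidesteps this by only ever buying dipaths that lie inside $A_i$, where every vertex $v\in V(A_i)$ satisfies $d(r_i,v)=d(\{r_1,\ldots,r_R\},v)\le\widetilde{\opt}$ after the preprocessing step; a path from $r_i$ to an arbitrary ``hub'' could leave $A_i$ and be arbitrarily expensive. Third, and related, you treat the undirected/directed mismatch as a minor issue, but it is the crux of the multi-rooted separator lemma: the Thorup paths $P_1,P_2,P_3$ in the spanning tree $T'=F'\cup\bigcup_i A_i$ traverse connector edges $F'$ that are not directed and are never purchased. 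One must show that the restriction of $P_1\cup P_2\cup P_3$ to each $A_i$ is connected (using acyclicity of $T'$), lies beneath at most four marked vertices, and hence is covered by at most four \emph{directed} shortest dipaths from $r_i$; and that feasibility of the merged solution never relies on the unbought $F$ or $F'$ edges (Lemma~\ref{lem: merger soln multi-rooted}). Your proposal asserts that this works but does not supply the argument, and without the ``at most four shortest dipaths per touched root'' bound you do not get the per-level cost of $O(R'\cdot\opt)$ that the induction needs.
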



\section{Preliminaries}\label{sec: prelim}

For convenience, we allow our input graphs to contain multiple directed edges between two nodes.
All 
directed paths (dipath for short) in this paper are simple. Fix a digraph $G=(V,E)$ with edge costs $c_e\geq 0$ for all $e\in E$. We identify a dipath $P$ by its corresponding sequence vertices, i.e., $P=v_1,\ldots,v_a$ and we say $P$ is a $v_1-v_a$-dipath. The {\em start} and {\em end} vertices of $P$ are $v_1$ and $v_a$, respectively. For a subgraph $H$ of $G$, we define the cost of a subgraph $H$ by $\cost_c(H):=\sum\limits_{e\in E(H)}c_e$


We say a vertex $v$ is {\em reachable} from $u$ if there is a dipath from $u$ to $v$. We denote by $d_c(u,v)$ the cost of a shortest dipath from $u$ to $v$, in particular, $d_c(u,u)=0$. The {\em diameter} of a digraph is defined as the maximum $d_c(u,v)$ for all $u\neq v$ where $v$ is reachable from $u$. For both $d_c(.)$ and $\cost_c(.)$ we drop the subscript $c$ if the edge costs is clear from the context. For a subset $S\subseteq V$ and a vertex $u$, we define $d(S,v):=\min\limits_{u\in S}\{d(u,v)\}$. Denote by $G[S]$ the {\em induced subgraph} of $G$ on the subset of vertices $S$, i.e., $G[S]=(S,E[S])$ where $E[S]$ is the set of edges of $G$ with both endpoints in $S$. A {\em weakly connected component} of $G$ is a connected component of the undirected graph obtained from $G$ by ignoring the orientation of the edges. The {\em indegree} of a vertex $v$ with respect to $F\subseteq E$ is the number of edges in $F$ oriented towards $v$. 

A {\em partial arborescence} $T = (V_T, E_T)$ rooted at $r$ in $G$, is a (not necessarily spanning) subgraph of $G$ such that $r \in V_T$ and $T$ is a directed tree oriented away from $r$. An arborescence is a partial arborescence that spans all the vertices. A {\em breadth first search} (BFS) arborescence $B_G$ rooted at $r$ is a (perhaps partial) arborescence including all nodes reachable from $r$ where the dipath from $r$ to any vertex $v$ on $B_G$ is a shortest dipath from $r$ to $v$.

For two disjoint subsets of vertices $S,T\subseteq V$ denote by $\delta(S,T)$ the set of edges with one endpoint in $S$ and the other endpoint in $T$ (regardless of the orientation). 

Given a subgraph $H$ of $G$, for rotational simplicity we write $G/H$ the resulting graph from contracting all the edges in $H$. Also we denote by $G\setminus H$ the resulting graph by removing $H$ from $G$, i.e., removing all the vertices of $H$ and the edges incident to these vertices.


Our algorithm is based on planar separators described by Thorup \cite{thorup2004compact}.
\begin{theorem}[Lemma 2.3 in \cite{thorup2004compact}]\label{thm: separator}
Let $G=(V,E)$ be a connected and undirected planar graph with non-negative vertex weights, and let $T$ be a spanning tree rooted at a vertex $r\in V$. In linear time, one can find three vertices $v_1,v_2$, and $v_3$ such that the union of vertices on paths $P_i$ between $r$ and $v_i$ in $V(T)$ for $i=1,2,3$ forms a separator of $G$, i.e., every connected component of $G\setminus (P_1\cup P_2\cup P_3)$ has at most half the weight of $G$.
\end{theorem}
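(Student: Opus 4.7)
The plan is to adapt the classical Lipton--Tarjan / Miller planar separator argument so that the separator is expressed as a union of three $T$-paths from the root. First I would reduce to the case where $G$ is triangulated: by inserting zero-weight dummy edges (and possibly zero-weight dummy vertices) inside non-triangular faces, I obtain a planar graph $G'$ that still contains $T$ as a subgraph, has the same total vertex weight, and whose vertex separators restrict to vertex separators of $G$.

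Next I would invoke the classical primal--dual spanning-tree duality: the edges of $G'$ that do not lie in $T$ form a spanning tree $T^*$ of the planar dual $G'^{*}$, whose vertices are the faces of $G'$. I would assign a weight to each face (vertex of $T^*$) by charging each vertex $v$ of $G'$ to a single incident face using a consistent rule (for example, to the face immediately clockwise of the tree edge from $v$ to its parent in $T$), so that the total face-weight equals the total vertex-weight of $G'$. Using a standard centroid argument, I would then pick a face $f^*$ of $T^*$ whose removal partitions $T^*$ into subtrees, each of total face-weight at most half of the total; such a centroid exists in any weighted tree and can be found in linear time. Because $G'$ is triangulated, $f^*$ has exactly three bounding vertices $v_1, v_2, v_3$, and I take $P_i$ to be the $T$-path from $r$ to $v_i$.

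The heart of the argument is showing that $V(P_1)\cup V(P_2)\cup V(P_3)$ is indeed a balanced vertex separator of $G'$. Topologically, the three paths share a common initial segment rooted at $r$ and then branch toward $v_1, v_2, v_3$; together with the boundary of $f^*$ they form a ``theta'' curve in the plane that partitions the complement into open regions, each of which naturally corresponds to one of the subtrees of $T^* \setminus \{f^*\}$. Every connected component of $G' \setminus \bigcup_i V(P_i)$ is confined to a single such region, since escaping would force it to cross either $f^*$ itself or one of the paths $P_i$, both of which already lie in the separator; hence its weight is bounded by the total face-weight of one subtree, which is at most half the total by the choice of $f^*$. The main obstacle I anticipate is making this topological confinement argument fully rigorous from the planar embedding, in particular treating the degenerate case when two or three of the paths share a long common prefix so that the enclosed regions and the charging of each face to a unique subtree remain well-defined.
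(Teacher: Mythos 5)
The paper does not prove this statement; it is quoted (as Lemma~2.3) from Thorup's compact-oracle paper, which in turn credits Lipton--Tarjan, so there is no in-paper proof to compare against. Judged on its own, your proposal is a faithful outline of the classical interdigitating-trees argument: triangulate, take the cotree as a spanning tree of the dual, find a centroid face $f^*$, and output the three root-to-corner tree paths. This is the right approach and it does work. The ``main obstacle'' you flag --- making the topological confinement rigorous --- is routine once you observe two things. Writing $Y := P_1 \cup P_2 \cup P_3$: (i) a face of the triangulated graph $G'$ cannot straddle two regions cut out by $Y \cup \partial f^*$, because that curve is a subgraph of $G'$ and $G'$-faces never cross $G'$-edges, so your vertex-to-face charging is region-consistent and each region's vertex weight is dominated by the face-weight of the corresponding subtree of $T^*\setminus\{f^*\}$; and (ii) an edge of $G'\setminus Y$ leaving a region would have to cross either a tree edge in $Y$ (impossible in a planar embedding without passing through a $Y$-vertex) or one of the three boundary edges of $f^*$, whose endpoints are already the separator vertices $v_1,v_2,v_3$. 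Two small points to tighten: triangulate by adding edges only (possibly creating parallel edges) --- adding dummy vertices would force you to extend $T$, changing the tree paths; a face may then have fewer than three distinct corners, which only helps. And the root $r$ has no parent edge, so exclude it from the charging; since $r$ always lies on the separator this loses nothing.
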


An immediate consequence of the above result is that given a directed graph and a BFS arborescence rooted at $r$ instead of a spanning tree, one can obtain a separator consisting three shortest dipaths each starting at $r$.

\begin{corollary}[Directed separator]\label{cor: di-separator}
Let $G=(V,E)$ be a planar digraph with edge costs $c_e\geq 0$ for all $e\in E$, and non-negative vertex weights such that every vertex $v \in V$ is reachable from $r$. Given a vertex $r\in V$, in polynomial time, we can find three shortest dipaths $P_1,P_2$, and $P_3$ each starting at $r$ such that every weakly connected component of $G\setminus (P_1\cup P_2\cup P_3)$ has at most half the weight of $G$.
\end{corollary}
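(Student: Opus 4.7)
The plan is to reduce Corollary~\ref{cor: di-separator} directly to Theorem~\ref{thm: separator} by choosing a spanning tree whose root-to-vertex paths are already shortest dipaths, and then relating weakly connected components to connected components of the underlying undirected graph.

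First, I would compute a BFS arborescence $B_G$ rooted at $r$ in the digraph $G$; since every $v \in V$ is reachable from $r$ by hypothesis, $B_G$ spans $V$. Let $G^u$ denote the underlying undirected graph of $G$ (keeping parallel edges if present) and let $T$ be the underlying undirected spanning tree of $B_G$. Both are planar because $G$ is. Now apply Theorem~\ref{thm: separator} to the planar graph $G^u$ with the spanning tree $T$ and the given vertex weights. This yields three vertices $v_1, v_2, v_3$ together with the root-to-$v_i$ paths $P_i$ in $T$ such that every connected component of $G^u \setminus (P_1 \cup P_2 \cup P_3)$ has at most half of the total vertex weight, all in linear time.

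Next I would observe that because $B_G$ is an arborescence rooted at $r$ and $T$ is its undirected image, the unique $r$-to-$v_i$ path in $T$ is, when read with its natural orientation in $B_G$, precisely the $r$-to-$v_i$ dipath in $B_G$. By the defining property of a BFS arborescence, this dipath is a shortest $r$-to-$v_i$ dipath in $G$, so each $P_i$ can be reported as the required shortest dipath starting at $r$.

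Finally, the weakly connected components of $G \setminus (P_1 \cup P_2 \cup P_3)$ are by definition the connected components of its underlying undirected graph, which is exactly $G^u \setminus (P_1 \cup P_2 \cup P_3)$; the weight bound from Theorem~\ref{thm: separator} therefore transfers immediately. I expect no serious obstacle in this argument: the only point that requires a moment of care is verifying that the tree paths produced by Theorem~\ref{thm: separator} are traversed in the forward direction of $B_G$ (they are, because $B_G$ is oriented away from $r$ and the tree paths start at $r$), which is exactly what allows us to reinterpret them as shortest dipaths rather than merely undirected shortest paths.
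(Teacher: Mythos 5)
Your proof is correct and follows the same route the paper implicitly intends: take the shortest-path (BFS) arborescence rooted at $r$, apply Theorem~\ref{thm: separator} to the underlying undirected graph with that spanning tree, and observe both that the resulting tree paths are oriented away from $r$ (hence genuine shortest dipaths) and that weakly connected components of a digraph are by definition the connected components of its undirected image. The paper states this corollary as an ``immediate consequence'' without writing out the details; your argument supplies exactly those details and contains no gaps.
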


Throughout this paper, we create subinstances from $I$ by contracting a subset of edges $F$ in $G$. Whenever, we create a subinstance $I'$ we let the edge cost for the subinstance to be the natural restriction of $c$ to $G/F$, i.e., if $e$ is in both $E(G)$ and $E(G/F)$ then $e$ has cost $c_e$ in $I'$ and if $e$ is in $E(G/F)$ but not in $E(G)$, then its cost in $I'$ is set to be the cost of the corresponding edge in $E(G)$.

Let $I=\big(G=(V,E),c,\{r_1,\ldots,r_R\},X\big)$ be an instance of MR-DST on planar graphs where $G$ is a planar digraph, $c_e\geq 0$ for all $e\in E$ is the edge costs, $\{r_1,\ldots,r_R\}$ are the roots, and $X\subseteq V\setminus\{r_1,\ldots,r_R\}$ is the set of terminals. By losing a small factor in the approximation guarantee, one can assume in an instance of MR-DST that all the costs are positive integers and $d\big(\{r_1,\ldots,r_R\},v\big)$ is polynomially bounded by $n$ for all $v\in V$. The very standard proof
appears in Appendix \ref{app:scale}.

\begin{lemma}[Polynomial bounded distances]\label{lem: poly bounded dist}
For any constant $\epsilon > 0$, if there is an $\alpha$-approximation for MR-DST instances in planar graphs where all edges $e$ have positive integer costs $c_e \geq 1$ and $d_c(r, v) \leq \frac{|X| \cdot |V|}{\epsilon} + |V|$ for each $v \in V$, then
there is an $(\alpha\cdot(1+\epsilon))$-approximation for general instances of MR-DST in planar graphs.
\end{lemma}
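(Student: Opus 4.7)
The plan is a standard cost-scaling plus guess-the-optimum reduction. Given an arbitrary instance $I$, first contract any zero-cost edges and, by scaling rational costs, reduce to the case where costs are positive integers; both operations preserve planarity and the approximation ratio. A trivial feasible solution, formed by taking a cheapest root-to-terminal dipath for each terminal, has cost at most $k\cdot\OPT$, so $\OPT$ lies in a polynomial-range bracket. Enumerate guesses $B$ over powers of $(1+\delta)$ in this bracket, with $\delta = \Theta(\epsilon)$, so at least one guess satisfies $\OPT \leq B \leq (1+\delta)\OPT$.

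For each guess $B$ I construct a modified planar instance $I'_B$: (i) delete every edge $e$ with $c_e > B$, since no such edge can appear in an optimum of cost $\leq B$; (ii) rescale $c'_e := \lceil c_e\cdot kn/(\epsilon B)\rceil$, which makes every remaining cost a positive integer; (iii) delete every vertex $v$ with $d_{c'}(\{r_1,\ldots,r_R\},v) > kn/\epsilon + n$. All operations preserve planarity, so $I'_B$ satisfies the hypothesis of the lemma (with the root-set distance playing the role of ``$d_c(r,v)$''). The assumed $\alpha$-approximation then returns a feasible solution $F'_B$ with $c'(F'_B) \leq \alpha\cdot\OPT(I'_B)$. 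Run this for every guess and output the cheapest resulting solution measured in the original costs.

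For the \emph{correct} guess, the original optimum survives in $I'_B$: its edges persist through (i) since each has cost at most $\OPT \leq B$, and any vertex on its arborescence is reached from a root along an optimum dipath $P$ of original cost $\leq B$, whose scaled cost is at most $\sum_{e\in P}\bigl(c_e\cdot kn/(\epsilon B) + 1\bigr) \leq kn/\epsilon + n$ via the elementary $\lceil x\rceil \leq x+1$ estimate summed over at most $n$ edges. Hence no vertex of the optimum is deleted in (iii), so $\OPT(I'_B) \leq kn/\epsilon + n$, and converting $F'_B$ back to the original costs gives cost at most $(\epsilon B/(kn))\cdot c'(F'_B) \leq \alpha(1+\epsilon/k) B \leq \alpha(1+\epsilon/k)(1+\delta)\OPT$, which is $\alpha(1+\epsilon)\OPT$ after choosing $\delta$ small enough and rescaling $\epsilon$.

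The only real point of care is step (iii): one must verify that no vertex actually used by the optimum arborescence is deleted, which is exactly the single-path ceiling bound above. Everything else—the enumeration of polynomially many guesses, the preservation of planarity under edge deletion/contraction and vertex removal, the conversion between scaled and original costs, and keeping only the feasible outputs across guesses—is routine bookkeeping.
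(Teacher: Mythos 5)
Your proof is correct and the core technique---ceiling-rounding after dividing by a scale factor, deleting far vertices/expensive edges, and bounding the rounding error by $n$ per path---is exactly the one the paper uses. The one genuine difference is that you wrap everything in a guess-and-enumerate loop over powers of $(1+\delta)$, whereas the paper shows this layer is unnecessary. The paper directly computes $\Delta := \max_{t\in X} d\big(\{r_1,\ldots,r_R\},t\big)$ and notes $\Delta \leq \OPT \leq k\cdot\Delta$, then scales once by $n/(\epsilon\Delta)$. Because $\Delta \leq \OPT$, the additive rounding error of at most $n$ per solution (in scaled units) translates back to at most $\epsilon\Delta \leq \epsilon\cdot\OPT$ in original units, which yields the $(1+\epsilon)$ loss without ever needing to know $\OPT$ to within a $(1+\delta)$ factor. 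Your version buys nothing over this: the coarse factor-$k$ estimate is already sharp enough, so the binary-search-style enumeration, the careful tracking of which guess ``succeeds,'' and the $\delta$-vs-$\epsilon$ bookkeeping at the end are all avoidable. Otherwise the arguments match step for step, including the single-path ceiling bound you flag as the key point of care.
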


\section{Planar DST}
In this section we prove Theorem \ref{thm: planar DST}. Fix an instance $I=\big(G=(V,E),c,r,X\big)$ of DST on planar graphs 
that satisfies the assumptions in Lemma \ref{lem: poly bounded dist} for, say, $\epsilon = 1/2$.
Let $n:=|V|$ and $k:=|X|$. Furthermore, fix an optimal solution $\OPT$ for this instance and let $\opt$ denote its cost. 
So the distance of every vertex from $r$ is at most $O(n \cdot k)$.

Our algorithm recursively constructs smaller subinstances based on a partial arborescence (as a separator) and disjoint subsets of vertices (as the weakly connected components after removing the separator). The following is a more formal definition of these subinstances.

\begin{definition}[Induced subinstances]\label{def: induced subinstances single root}
Let $I=(G=(V,E),c,r,X)$ be an instance of DST on planar graphs. Let $T$ be a partial arborescence rooted at $r$, and let $C_1,\ldots,C_h$ be the weakly connected components of $G\setminus T$. The subinstances of DST induced by tuple $(G,T,C_1,\ldots,C_h)$ are defined as follows: let $G_{\contract}$ be the graph obtained from $G$ by contracting $T$ into $r$. For each $C_i$ where $1\leq i \leq h$ we construct instance $I_{C_i}:=\big(G_{C_i},c,r,C_i\cap X\big)$ where $G_{C_i}:=G_{\contract}[C_i\cup\{r\}]$. See Figure \ref{fig1}.
\end{definition}

Given solutions $\mathcal{F}_1,\mathcal{F}_2,\ldots,\mathcal{F}_h$ for the subinstances induced by $(G,T,C_1,\ldots,C_h)$, one can naturally consider the corresponding subset of edges of $E(T)\cup \mathcal{F}_1\cup \mathcal{F}_2\cup \ldots \cup \mathcal{F}_h$ in $G$ and it is easy to see this forms a feasible solution for instance $I$. We formalize this in the next lemma.

\begin{lemma}[Merged solution]\label{lem: merged soln}
Consider the subinstances $I_{C_i}$ for $1\leq i\leq h$ as defined in Definition \ref{def: induced subinstances single root}. Let $\mathcal{F}_{C_i}$ be a solution for $I_{C_i}$. Let $\mathcal{F}\subseteq E(G)$ be the corresponding edges of $E(T)\cup(\bigcup_{i=1}^h \mathcal{F}_{C_i})$ in $G$. Then, $\mathcal{F}$ is a feasible solution for instance $I$ and furthermore $\cost(\mathcal{F})=\cost(T)+\sum\limits_{i=1}^h \cost(\mathcal{F}_{C_i})$. See Figure \ref{fig1}.
\end{lemma}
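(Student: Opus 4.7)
The plan is to verify the two claims of the lemma separately: feasibility of $\mathcal{F}$ for $I$, and the cost identity $\cost(\mathcal{F}) = \cost(T) + \sum_{i=1}^h \cost(\mathcal{F}_{C_i})$.

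First I would handle the cost identity by showing that, viewed as subsets of $E(G)$ under the natural lift, the sets $E(T), E(G_{C_1}), \ldots, E(G_{C_h})$ are pairwise disjoint. The key structural observation is that $C_1, \ldots, C_h$ are the weakly connected components of $G \setminus T$, so no edge of $G$ has one endpoint in $C_i$ and another in $C_j$ for $i \neq j$; hence lifts of edges from distinct $G_{C_i}$'s cannot coincide in $G$. Moreover, by the contracting convention from Section~2, each non-loop edge of $G_{\contract}$ has a unique preimage in $G$ of the same cost. Finally, edges of $G$ with both endpoints in $V(T)$ either belong to $E(T)$ itself or become self-loops at $r$ in $G_{\contract}$ and therefore appear in no $G_{C_i}$. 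Summing edge costs over these pairwise disjoint sets yields the identity.

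Next I would establish feasibility by showing that every terminal $t \in X$ is reachable from $r$ in $\mathcal{F}$. If $t \in V(T)$, the partial arborescence $T \subseteq \mathcal{F}$ already contains an $r$-$t$ dipath. Otherwise $t$ lies in a unique component $C_i$, so $t \in C_i \cap X$ is a terminal of $I_{C_i}$ and $\mathcal{F}_{C_i}$ contains a simple $r$-$t$ dipath $P$ in $G_{C_i}$. Because $P$ is simple, it visits $r$ only at the start, so the only edge of $P$ incident to $r$ is its first one, say $e_1 = (r, u_1)$. This edge is the image in $G_{\contract}$ of some edge $\tilde e_1 = (v_0, u_1)$ of $G$ with $v_0 \in V(T)$; every subsequent edge of $P$ lives inside $G[C_i]$ and lifts to itself. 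Since $T$ is an arborescence rooted at $r$, it contains an $r$-$v_0$ dipath $Q$. Concatenating $Q$, $\tilde e_1$, and the lifted remainder of $P$ produces an $r$-$t$ walk in $\mathcal{F}$, from which an $r$-$t$ dipath is immediate.

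The main obstacle I anticipate is simply the bookkeeping around the contracted vertex $r$: one must check that each edge used by some $\mathcal{F}_{C_i}$ lifts to a unique edge of $G$, that two different $\mathcal{F}_{C_i}$'s cannot share a lifted edge, and that the $T$-prefix used to reach $v_0$ really produces a \emph{directed} path rather than an undirected walk. All three concerns are resolved by the two observations above --- pairwise non-adjacency of the components in $G \setminus T$, and uniqueness of preimages of non-loop edges of $G_{\contract}$ --- together with the fact that $T$ is a rooted arborescence. Once these are in place, both the feasibility argument and the cost accounting are essentially mechanical.
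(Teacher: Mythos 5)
Your proposal is correct and follows essentially the same route as the paper: identify the first edge $(r,v)$ of the dipath in the subinstance, lift it to an edge $(u,v)$ of $G$ with $u\in V(T)$ (using $\delta(C_i,C_j)=\emptyset$), and concatenate the $r$-$u$ dipath in the arborescence $T$ with the lifted remainder. You spell out the cost-disjointness and the $t\in V(T)$ case, which the paper leaves implicit or calls obvious, but the substance is the same.
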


\begin{figure}
    \centering
    \includegraphics[scale=0.6]{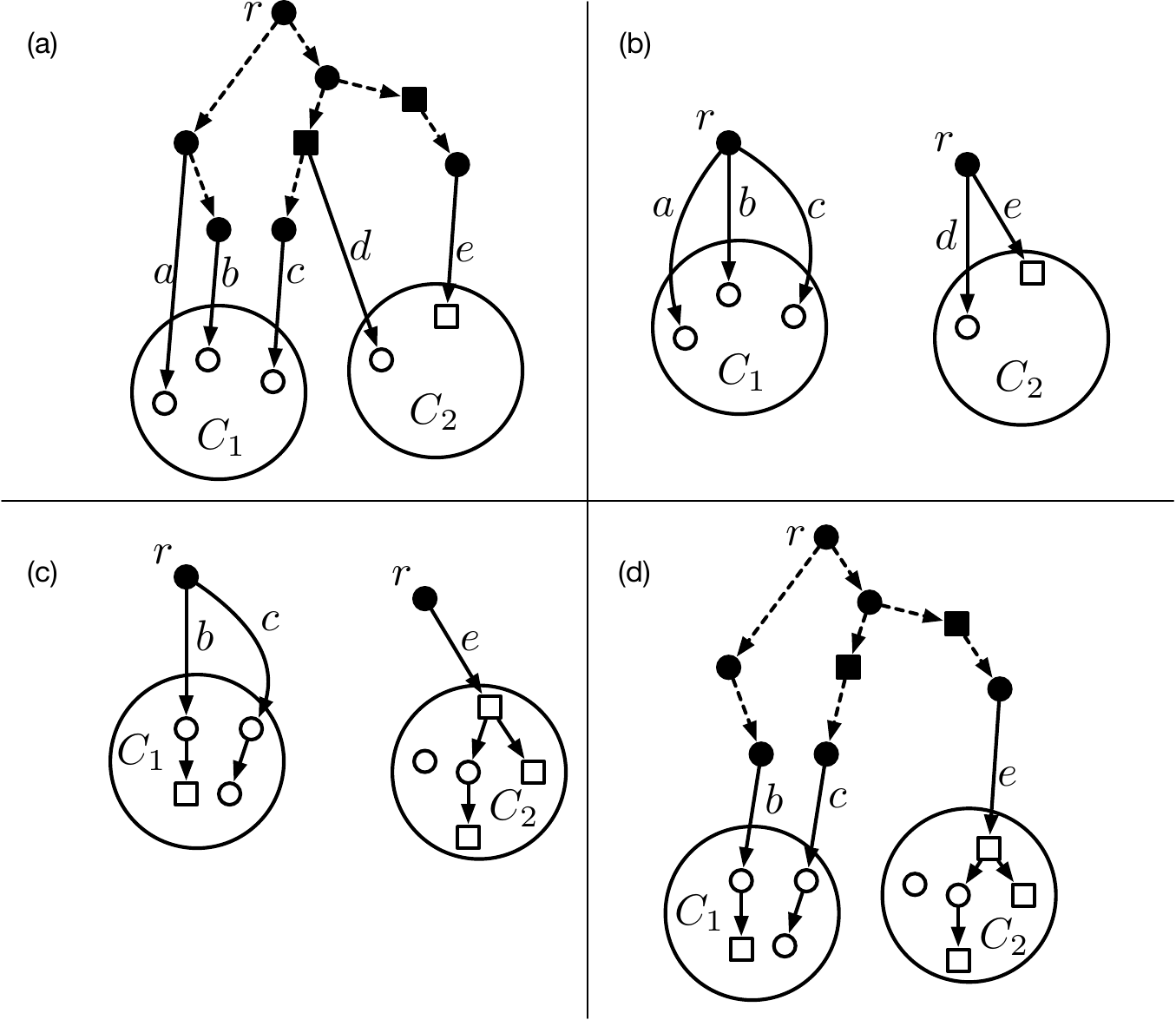}
    \caption{Throughout, squares are terminals and circles are Steiner nodes or the root node $r$. In (a) the separator is shown with dashed edges and solid vertices. The weakly connected components of $G\setminus T$ are shown as circles denoted by $C_1$ and $C_2$. Note that we did not show any edge directed from $C_1$ or $C_2$ into the separator because we can safely remove these edges. In (b) the subinstances $I_{C_1}$ and $I_{C_2}$ induced by $(G,T,C_1,C_2)$ are depicted. In (c), the solutions for each subinstances are shown. Finally, (d) shows how to merge the solutions in (c) to get a solution for the original instance. Note that leaf nodes are not necessarily terminals. One could prune them as a post-processing step, but that is not required by our algorithm.}
    \label{fig1}
\end{figure}

\begin{proof}
The furthermore part is obvious so we prove that $\mathcal{F}$ is feasible for $I$. Consider a terminal node $t\in C_i$. Since $\mathcal{F}_i$ is feasible for $I_{C_i}$, then there is a dipath $P$ from $r$ to $t$. Let $(r,v)$ be the first edge on $P$ and let $(u,v)$ be the corresponding edge to $(r,v)$ in $E(G)$. Then, we must have $u\in V(T)$ as $\delta(C_i,C_j)=\emptyset$ for all $1\leq i\neq j\leq h$. So we can go from $r$ to $u$ in $T$, then take the edge $(u,v)$ and then go from $v$ to $t$ in $\mathcal{F}_{C_i}$. Since all these edges are present in $\mathcal{F}$ and $t$ is an arbitrary terminal, $\mathcal{F}$ is a feasible solution for $I$.
\end{proof}

We first present a high-level idea of a simple $O(\log k)$-approximation that runs in quasi-polynomial time and then with a little extra work, we can make it run in polynomial time with small loss in the approximation guarantee.

\subsection{Warm-up: An overview of a quasi-polynomial time approximation}\label{sec: quasi_polytime}
The algorithm is simple. Fix an optimal solution $\OPT$ with cost $\opt$. First we guess $\opt$. Note by Lemma \ref{lem: poly bounded dist}, $\opt$ is polynomial in $n$ and integral so there are polynomial guesses. Then, we remove all vertices such that their distance from $r$ is more than our guessed value (this is the preprocessing step). For the purpose of separating into subinstances with balanced weight, we let the weight of each terminal to be $1$ and the rest of vertices have zero weight. Apply Corollary \ref{cor: di-separator} and let $P_1,P_2$, and $P_3$ be the resulting shortest dipaths each starting at $r$. Note that $\cost(P_i)\leq \opt$ for $i=1,2,3$ because of the preprocessing step. Let $T:=P_1\cup P_2\cup P_3$, then $T$ is a branching rooted at $r$. Let $C_i$ for $1\leq i\leq h$ be the weakly connected components of $G\setminus T$. Then, we recursively solve the subinstances induced by $(G,T,C_1,\ldots,C_h)$ (see Definition \ref{def: induced subinstances single root}), and finally return the corresponding solution of $E(T)\cup\bigcup_{i=1}^h \mathcal{F}_{C_i}$ in $G$. When the number of terminals in a subinstance becomes one, we can solve the problem exactly by finding the shortest dipath between the root and the only terminal.

Note that each recursive call reduces the number of terminals by half. The guess work for each instance is polynomial in $n$. So it is easy to see the total number of recursive calls is bounded by $n^{O(\log k)}$. Since each time we apply the separator result on an instance $I$, we buy a branching (union of up to three dipaths) of cost at most $3\cdot\opt$, and since the {\em total} cost of optimal solutions across all of the resulting subinstances $I_{C_i}$ is at most $\opt$, a simple induction on the number of terminals shows the final cost is within $(3\cdot\log k+1)\cdot\opt$. A slight improvement to the running time could be made by guessing $\OPT$ within a constant factor (thus only making $O(\log n)$ guesses since all distances are integers bounded by a polynomial in $n$), but the size of the recursion tree would still be $O(\log n)^{O(\log k)}$ which is still not quite polynomial.

\subsection{The polynomial-time algorithm}

The idea here is similar to the quasi-polynomial time algorithm; however, instead of guessing the diameter of an optimal arborescence for each instance, we keep an estimate of it.
Our recursive algorithm tries two different recursive calls: (1) divide the current estimate by half and recurse, or (2) buy a separator and divide the instance into smaller instances and recurs on these instances using the current estimate as the current estimate passed to each smaller instance.

The rationale behind this idea is that if the estimate is close to the optimal value, then our separator is \say{cheap} compared to optimal value so (2) is a \say{good progress} otherwise we make the estimate smaller so (1) is a \say{good progress}. 
The key idea here that leads to polynomial time is that we do not ``reset'' our guess for the optimal solution cost in each recursive call since we know that if our guess is correct for the current instance, then it is an upper bound for the optimal solution cost in each subinstance.

As we mentioned at the beginning, the algorithm is recursive. The input to the algorithm is a tuple $(I,\widetilde{\opt})$ where $\widetilde{\opt}$ is an estimate of $\opt$. The algorithm computes two solutions and take the better of the two. One solution is by a recursive call to $(I,\frac{\widetilde{\opt}}{2})$ and the other one is obtained by applying Corollary \ref{cor: di-separator} to get smaller subinstances and solve each subinstance recursively and merge the solutions as described in Lemma \ref{lem: merged soln}. See Algorithm \ref{alg: log k-approx polytime} for the pseudocode.

\begin{algorithm}
\caption{$\dst(I,\widetilde{\opt})$}
{\bf Input}: $I:=\big(G=(V,E),c,r,X\big)$ and an estimate $\widetilde{\opt}$.\\
{\bf Output}: A feasible solution for instance $I$ or output infeasible.
\begin{algorithmic}
\IF{$\widetilde{\opt}<1$ or $d(r,t) > \widetilde{\opt}$ for some terminal $t \in X$}
\RETURN \texttt{infeasible}
\ELSIF{$|X_I|=1$}
\STATE Let $\mathcal{F}$ be the shortest dipath from $r$ to the only terminal in $X_I$.
\ELSE
\STATE $\mathcal{F}_1 \gets \dst(I,\frac{\widetilde{\opt}}{2})$, if $\mathcal{F}_1$ is \texttt{infeasible} solution then set $\cost(F_1)\gets \infty$. 
\STATE Remove all vertices $v$ with $d(r,v)>\widetilde{\opt}$. \COMMENT{This is the preprocessing step.}
\STATE Apply Corollary \ref{cor: di-separator} to obtain a partial arborescence $T$ consists of up to $3$ shortest dipaths starting at $r$. Let $C_1,\ldots,C_h$ be the weakly connected components of $G\setminus T$. Let $I_{C_i}$ be the $i$-th subinstance induced by $(G,T,C_1,\ldots,C_h)$ for $i=1,\ldots, h$.
\FOR{$i = 1,\ldots , h$}
\STATE $\mathcal{F}'_i\gets \dst(I_{C_i},\widetilde{\opt})$
\ENDFOR
\STATE $\mathcal{F}_2 \gets E(T)\cup(\bigcup\limits_{i=1}^h \mathcal{F}'_i)$, if any $\mathcal F'_i$ is \texttt{infeasible} then set $\cost(\mathcal F_2) \gets \infty$. 
\IF{both $\cost(\mathcal F_1)$ and $\cost(\mathcal F_2)$ are $\infty$}
\RETURN \texttt{infeasible}
\ENDIF
\STATE $\mathcal{F}\gets\argmin\{\cost(\mathcal F_1), \cost(\mathcal F_2)\}$
\ENDIF
\RETURN $\mathcal{F}$.
\end{algorithmic}
\label{alg: log k-approx polytime}
\end{algorithm}
By Lemma \ref{lem: poly bounded dist}, we can assume the edge costs are positive integers and hence $\opt\geq 1$. So if $\widetilde{\opt}<1$, then the output of $\dst(I,\widetilde{\opt})$ is infeasible. The algorithm will terminate since each recursive call either halves $\widetilde{\opt}$ or halves the number of terminals.


\subsection{Analysis}
In this section, we analyze the cost and the running time of Algorithm \ref{alg: log k-approx polytime}. 

\begin{lemma}[Cost and running time]\label{lem: induction cost}
Consider an instance $I=\big(G=(V,E),c,r,X\big)$ and a pair $(I,\widetilde{\opt})$. Let $\ell$ and $\smallo$ be non-negative integers such that $|X|\leq 2^{\ell}$ and $\widetilde{\opt}\leq 2^{\smallo}$. If $\widetilde{\opt}\geq \opt$ where $\opt$ is the optimal value of $I$, then $\dst(I,\widetilde{\opt})$ returns a solution with cost at most $(6\cdot \ell+1)\cdot\opt$. Furthermore, the total number of recursive calls made by $\dst(I,\widetilde{\opt})$ and its subsequent recursive calls is at most $|X|\cdot 2^{2\cdot\ell+\smallo}$.
\end{lemma}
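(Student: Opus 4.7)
I plan to prove both the cost bound and the call-count bound by joint induction on $\ell+\smallo$. The base cases are: (i) $|X|=1$ (so $\ell=0$), where the algorithm returns the shortest dipath from $r$ to the unique terminal, incurring cost $\opt$ and one recursive call, which satisfies both $(6\cdot 0+1)\opt=\opt$ and $1\le|X|\cdot 2^{\smallo}$; and (ii) the ``immediately infeasible'' regime $\widetilde{\opt}<1$, where the algorithm returns after one call (the cost bound is vacuous since $\widetilde{\opt}\ge\opt\ge 1$ would be violated). In the inductive step I may assume $|X|\ge 2$ and $\widetilde{\opt}\ge 1$.

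For the cost bound I split on whether $\widetilde{\opt}/2\ge\opt$. If so, then $(I,\widetilde{\opt}/2)$ still satisfies the hypothesis with parameters $(\ell,\smallo-1)$, so by induction $\cost(\mathcal F_1)\le(6\ell+1)\opt$, and $\cost(\mathcal F)\le\cost(\mathcal F_1)$. Otherwise $\opt\le\widetilde{\opt}<2\opt$, and I analyze $\mathcal F_2$. The preprocessing leaves every vertex used by $\OPT$ intact (those are within $\opt\le\widetilde{\opt}$ of $r$), and each of the up to three shortest dipaths from Corollary~\ref{cor: di-separator} has cost at most $\widetilde{\opt}$, so $\cost(T)\le 3\widetilde{\opt}<6\opt$. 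Combining this with the inductive hypothesis applied to each $I_{C_i}$ (parameters $(\ell-1,\smallo)$, using $|C_i\cap X|\le|X|/2$ and $\widetilde{\opt}\ge\opt_i$) and Lemma~\ref{lem: merged soln} yields
\[ \cost(\mathcal F_2)\;\le\;6\opt+(6(\ell-1)+1)\sum_i\opt_i\;\le\;(6\ell+1)\opt, \]
provided the key subadditivity $\sum_i\opt_i\le\opt$ holds.

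The main content of the argument is this subadditivity, which I plan to prove by exhibiting, for each $i$, a feasible solution $F_i$ of $I_{C_i}$ such that $\sum_i\cost(F_i)\le\cost(\OPT)$. I take $F_i$ to be the image in $G_{C_i}$ of the edges of $\OPT$ whose head lies in $C_i$; edges internal to $C_i$ remain and edges from $T$ into $C_i$ become out-edges of the contracted root $r$. Since every edge of $\OPT$ has its head in at most one $C_i$ (or in $T$, in which case it is in no $F_i$), the cost-sum bound is immediate. Feasibility is the delicate part: for each terminal $t\in C_i$, the $r$-$t$ dipath $P_t$ in $\OPT$ may enter and leave $C_i$ multiple times via edges to and from $T$, but the weak disconnectedness of $C_1,\ldots,C_h$ in $G\setminus T$ forces $P_t$, after its \emph{last} crossing from $V\setminus C_i$ into $C_i$, to stay inside $C_i$ until reaching $t$. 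That crossing edge, viewed in $G_{C_i}$ as an edge out of $r$, concatenated with the in-$C_i$ tail of $P_t$, is an $r$-$t$ dipath lying in $F_i$.

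Finally, for the call-count bound the current call contributes $1$, the Branch-1 subcall contributes at most $|X|\cdot 2^{2\ell+\smallo-1}$ by induction at parameters $(\ell,\smallo-1)$ (or at most $1$ if $\widetilde{\opt}/2<1$, which is even smaller), and the Branch-2 subcalls contribute at most $\sum_i|C_i\cap X|\cdot 2^{2(\ell-1)+\smallo}\le|X|\cdot 2^{2\ell+\smallo-2}$ by induction at parameters $(\ell-1,\smallo)$; the total is at most $1+\tfrac{3}{4}|X|\cdot 2^{2\ell+\smallo}\le|X|\cdot 2^{2\ell+\smallo}$ since $|X|\cdot 2^{2\ell+\smallo}\ge 8$ in the inductive step. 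The only real obstacle is the subadditivity claim, which requires the last-entry observation described above; every other step is a routine inductive check.
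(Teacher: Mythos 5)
Your proof is correct and follows the same line of argument as the paper: induction on $\ell+\smallo$ splitting on whether $\widetilde{\opt}<2\cdot\opt$, the bound $\cost(T)\le 3\cdot\widetilde{\opt}\le 6\cdot\opt$ from the preprocessing, subadditivity $\sum_i\opt_{I_{C_i}}\le\opt$ via restricting $\OPT$ to the edge-disjoint $G_{C_i}$'s, and the same recursion for the call count. Your ``last crossing'' argument fleshes out in detail the subadditivity step that the paper asserts in a single clause, but this is an elaboration of the same idea rather than a different route.
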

\begin{proof}
First we analyze the cost of the output solution. If $\ell=0$ then we solve $I$ exactly so the statement holds. So for the rest of the proof we assume $\ell\geq 1$. We proceed by induction on $\ell+\smallo\geq 1$.

We assume $\widetilde{\opt}\leq 2\cdot\opt$, otherwise we have $\dst(I,\widetilde{\opt})\leq\dst(I,\frac{\widetilde{\opt}}{2})\leq (6\cdot \ell+1)\cdot\opt$ by induction where the last inequality holds because $\log\frac{\widetilde{\opt}}{2}\leq\log(\widetilde{\opt})-1$. 

Let $\mathcal{F}$ be the solution returned by $\dst(I,\widetilde{\opt})$. Since $\cost(\mathcal{F})\leq\cost(\mathcal{F}_2)$, it suffices to prove $\cost(\mathcal{F}_2)\leq (6\cdot\ell+1)\cdot\opt$. 
Let $\mathcal{F}'_i=\dst(I_{C_i},\widetilde{\opt})$ for $i=1,\ldots, h$ be the solutions constructed recursively for the subinstances. Note that each $I_{C_i}$ for $i=1\ldots,h$ has at most $2^{\ell-1}$ terminals and $\widetilde{\opt}\geq \opt_{I_{C_i}}$ where $\opt_{I_{C_i}}$ is the optimal value of $I_{C_i}$. By the induction hypothesis, we conclude




\begin{equation}\label{eq: cost subinstance}
    \cost(\mathcal{F}'_i)\leq (6\cdot (\ell-1)+1)\cdot\opt_{I_{C_i}}\leq 6\cdot\ell\cdot\opt_{I_{C_i}},~for~i=1,\ldots,h
\end{equation}

Note that $T$ is the union of up to three shortest dipaths and because of the preprocessing step, each shortest dipath starting at $r$ has cost at most $\widetilde{\opt}\leq 2\cdot\opt$. So the following holds:

\begin{equation}\label{eq: cost separator}
    \cost(T)\leq 3\cdot\widetilde{\opt}\leq 6\cdot\opt.
\end{equation}

Combining \eqref{eq: cost subinstance} and \eqref{eq: cost separator} we get:

\begin{align*}
    \cost(\mathcal{F})&=\cost(T)+\sum\limits_{i=1}^h\cost(\mathcal{F}'_i)\\
    &\leq \cost(T) + \sum\limits_{i=1}^h 6\cdot \ell\cdot\opt_{I_{C_i}}\\
    &\leq 6\cdot\opt+6\cdot \ell\cdot \sum\limits_{i=1}^h\opt_{I_{C_i}}\\
    &\leq 6\cdot\opt+6\cdot \ell\cdot\opt\\
    &=(6\cdot \ell+1)\cdot\opt,
\end{align*}
where the first equality follows from Lemma \ref{lem: merged soln}, the first and the second inequalities follow from \eqref{eq: cost subinstance} and \eqref{eq: cost separator}, respectively, and finally the last inequality follows from the fact that $\sum\limits_{i=1}^h\opt_{I_{C_i}}\leq\opt$ as the restriction of $\OPT$ on each $G_{C_i}$ is a feasible solution for $I_{C_i}$ and $G_{C_i}$'s are edge-disjoint.

Next, we analyze the number of recursive calls $R(\ell,\smallo)$ in $\dst(I,\widetilde{\opt})$. We prove by induction on $\ell+\smallo$ that $R(\ell,\smallo)\leq |X|\cdot 2^{2\cdot\ell+\smallo}$. If $\ell=0$, then there is no recursive call. So suppose $\ell\geq 1$. Let $X_i:=|X\cap C_i|\leq \frac{|X|}{2}$ be the number of terminals in subinstance $I_{C_i}$ and let $\ell_i$ be the smallest integer where $|X_i|\leq 2^{\ell_i}$. Since the number of terminals in the subinstances are halved, we have $\ell_i\leq \ell -1$ for all $1\leq i\leq h$. So we can write
\begin{align*}
   R(\ell,\smallo)&=1 + R(\ell,\smallo-1) + \sum\limits_{i=1}^h R(\ell_i,\smallo) \\
   &\leq 1 + |X|\cdot 2^{2\cdot\ell+\smallo-1} + \sum\limits_{i=1}^h |X_i|\cdot 2^{2\cdot\ell_i+\smallo} \\
   & \leq 1 + |X|\cdot 2^{2\cdot\ell+\smallo-1} + 2^{2(\ell-1)+\smallo}\cdot\sum\limits_{i=1}^h |X_i| \\
   & \leq 1 + |X|\cdot 2^{2\cdot\ell+\smallo-1} + 2^{2\cdot\ell+\smallo-2}\cdot |X| \\
   & = 1 + |X|\cdot 2^{2\cdot\ell+\smallo-1} + (2^{2\cdot\ell+\smallo-1}-2^{2\cdot\ell+\smallo-2})\cdot |X|\\
   & = 1 + |X|\cdot 2^{2\cdot\ell+\smallo}-|X|\cdot 2^{2\cdot\ell+\smallo-2}\\
   &\leq |X|\cdot 2^{2\cdot\ell+\smallo},
\end{align*}
where the first inequality follows from the induction hypothesis, the second inequality comes from the fact that $\ell_i\leq \ell-1$, the third inequality holds because $\sum\limits_{i=1}^h |X_i|\leq |X|$, and the last inequality follows from the fact that $|X|\geq 1$ and $\ell\geq 1$.
\end{proof}

\begin{pproof}{Theorem \ref{thm: planar DST}}
For any $\epsilon>0$, we can assume all the shortest dipaths starting at the root are bounded by $\poly(n,\epsilon)$ by losing a $(1+\epsilon)$ multiplicative factor in the approximation guarantee, see Lemma \ref{lem: poly bounded dist}. So we assume properties of Lemma \ref{lem: poly bounded dist} holds for the rest of the proof.

Let $\Delta$ be the maximum distance from the root to any terminal. Let $\widetilde{\opt}:=k\cdot\Delta\leq\poly(n)$. We find a solution by calling $\dst(I,\widetilde{\opt})$.
Applying Lemma \ref{lem: induction cost} with $\widetilde{\opt}:=k\cdot\Delta$, $\ell:=\lceil\log k\rceil\leq\log k+1$ and $\smallo:=\lceil \log \widetilde{\opt}\rceil$ guarantees the solution has cost at most $(6\cdot (\log k+1)+1)\cdot\opt$.

For running time of Algorithm \ref{alg: log k-approx polytime}, we have by Lemma \ref{lem: induction cost} that the number of recursive calls is at most $k\cdot 2^{2\cdot\ell+\smallo}=O(k^4\cdot\Delta)$. So the total number of recursive calls is $\poly(n)$ (recall $k\cdot\Delta=\poly(n)$). The running time within each recursive call is also bounded by $\poly(n)$ so the algorithm runs in polynomial time.
\end{pproof}

\section{Multi-rooted planar DST}

The algorithm for the multi-rooted case is similar to Algorithm \ref{alg: log k-approx polytime}. We need analogous versions of the separator, how we define the subinstances, and how we merge the solutions of smaller subinstances to get a solution for the original instance for the multi-rooted case. 


We start by a generalization of partial arborescence in the single rooted case to multiple roots.

\begin{definition}[Multi-rooted partial arborescence]\label{def: multi-rooted partial arb}
Given a digraph $G=(V,E)$, $R$ vertices $r_1,\ldots,r_R$ designated as roots. We say a subgraph $T$ of $G$ is a multi-rooted partial arborescence if it satisfies the following properties:
\begin{itemize}
    \item[1.] There are vertex-disjoint partial arborescences $T_{i_1},\ldots,T_{i_q}$ rooted at $r_{i_1},\ldots,r_{i_q}$, respectively, and a subset of edges $F\subseteq E\setminus \big(\bigcup_{j=1}^q E(T_{i_j})\big)$, where the endpoints of each edge in $F$ belong to $\bigcup_{j=1}^q V(T_{i_j})$, such that $T=F\cup(\bigcup_{j=1}^q T_{i_q})$.
    \item[2.] $T$ is weakly connected and has no cycle (in the undirected sense).
\end{itemize}
\end{definition}

If a multi-rooted partial arborescence $T$ covers all the vertices in $G$, then we say $T$ is a {\em multi-rooted arborescence} for $G$. See Figure \ref{fig2} for an example.

Fix an instance $I=(G,c,\{r_1,\ldots,r_R\},X)$ of $R$-rooted DST on planar graphs. Next, we present subinstnaces induced by a partial multi-rooted arborescence and bunch of disjoint subsets analogous to Definition \ref{def: induced subinstances single root}.

\begin{definition}[Induced subinstances, multi-rooted]\label{def: induced subinstances mluti-root}
Let $I=(G,c,\{r_1,\ldots,r_R\},X)$ of $R$-rooted DST on planar graphs. Let $T=F\cup(\bigcup_{j=1}^q T_{p_j})$ be a multi-rooted partial arborescence where $T_{p_j}$ is a partial arborescence rooted at $r_{p_j}$ for $1\leq j\leq q$. In addition, let $C_1,\ldots,C_h$ be the weakly connected components of $G\setminus T$. The subinstances of multi-rooted DST induced by tuple $(G,T,C_1,\ldots,C_h)$ are defined as follows: let $G_{\contract}$ be the graph obtained from $G$ by contracting $T$ into a singleton vertex called $r_T$. For each $C_i$ where $1\leq i \leq h$ we construct instance $I_{C_i}:=\Bigg(G_{C_i},c,\{r_T\}\cup\bigg( C_i\cap\Big(\{r_1,\ldots,r_R\}\setminus\{r_{p_1},\ldots,r_{p_q}\}\Big)\bigg),C_i\cap X\Bigg)$ where $G_{C_i}:=G_{\contract}[C_i\cup\{r_T\}]$.
\end{definition}

The following is analogous to Lemma \ref{lem: merged soln} for merging solution in the multi-rooted case.

\begin{lemma}[Merged solutions, multi-rooted]\label{lem: merger soln multi-rooted}
Let $T=F\cup(\bigcup_{j=1}^q T_{p_j})$ be a partial multi-rooted arborescence in $G$. Consider the subinstances $I_{C_i}$ for $1\leq i\leq h$ as defined in Definition \ref{def: induced subinstances mluti-root} and let $\mathcal{F}_{C_i}$ be a solution for $I_{C_i}$. Let $\mathcal{F}\subseteq E(G)$ be the corresponding edges in $(E(T)\setminus F)\cup(\bigcup_{i=1}^h \mathcal{F}_{C_i})$. Then, $\mathcal{F}$ is a feasible solution for instance $I$ and furthermore $\cost(\mathcal{F})=\cost(T\setminus F)+\sum\limits_{i=1}^h \cost(\mathcal{F}_{C_i})$.
\end{lemma}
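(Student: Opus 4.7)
The plan is to mimic the proof of Lemma \ref{lem: merged soln} almost verbatim, with two adjustments: the arborescence $T$ now consists of several partial arborescences $T_{p_1},\ldots,T_{p_q}$ glued by the extra edges in $F$, and the subinstance $I_{C_i}$ can have several roots (the super-root $r_T$ together with whichever original roots remained inside $C_i$). The cost equation is essentially free: edges of $T\setminus F$ have both endpoints in $V(T)$, while every edge of $\mathcal{F}_{C_i}$ pulled back to $E(G)$ has at least one endpoint in $C_i$, and since the $C_i$'s are pairwise disjoint and disjoint from $V(T)$, the three collections are pairwise edge-disjoint. Combined with the fact that edge costs are preserved under the contraction $G_{\contract}$ by definition, $\cost(\mathcal{F})=\cost(T\setminus F)+\sum_i\cost(\mathcal{F}_{C_i})$ follows by additivity; I would dispatch this in one or two lines.

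For feasibility I would fix an arbitrary terminal $t\in X$ and split into two cases. First, if $t\in V(T)$, then by Definition \ref{def: multi-rooted partial arb} the edges of $F$ have their endpoints inside the $T_{p_j}$'s and contribute no new vertices, so $V(T)=\bigcup_j V(T_{p_j})$ and $t$ lies in a unique $T_{p_j}$. The $r_{p_j}$-to-$t$ dipath in $T_{p_j}$ uses only edges of $E(T_{p_j})\subseteq E(T)\setminus F$, all of which lie in $\mathcal{F}$, certifying reachability from the original root $r_{p_j}$. Second, if $t\in C_i$, then by feasibility of $\mathcal{F}_{C_i}$ there is a dipath $P$ in $G_{C_i}$ from some root of $I_{C_i}$ to $t$. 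If that root is an original root $r_j\in C_i$, then $P$ pulls back edge-for-edge to $\mathcal{F}$ and we are done. Otherwise the root is $r_T$; letting $(r_T,v)$ be the first edge of $P$ and $(u,v)\in E(G)$ be its preimage, we have $u\in V(T)$, so $u\in V(T_{p_j})$ for some $j$, and I can prepend the $r_{p_j}$-to-$u$ dipath inside $T_{p_j}$, take edge $(u,v)$, and then follow $P$ from $v$ to $t$. All these edges lie in $\mathcal{F}$.

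The one mild subtlety, which I would flag as the main obstacle, is justifying that dropping the edges of $F$ from the reconstructed solution does no damage. The role of $F$ in Definition \ref{def: multi-rooted partial arb} is purely structural: it forces $T$ to be weakly connected so that collapsing $T$ into the single super-root $r_T$ in Definition \ref{def: induced subinstances mluti-root} makes sense. But for the directed reachability we actually need, every vertex of $V(T)$ is already reached from its home root $r_{p_j}$ through the edges of $T_{p_j}$ alone, so the edges of $F$ are never needed in $\mathcal{F}$. Once this is clearly stated, it is precisely what licenses using $E(T)\setminus F$ in the definition of $\mathcal{F}$ and what keeps the cost equation clean.
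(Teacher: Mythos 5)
Your proof is correct and follows essentially the same route as the paper's: the same case split on whether $t\in V(T)$ or $t\in C_i$, the same use of the first edge $(r_T,v)$ of $P$ and the fact that $\delta(C_s,C_{s'})=\emptyset$ to conclude its preimage endpoint $u$ lies in some $V(T_{p_j})$, and the same observation that $E(T_{p_j})\subseteq\mathcal{F}$ suffices. The extra discussion you give of the cost identity and of why discarding $F$ is harmless is a useful elaboration of what the paper leaves implicit, but it does not change the argument.
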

\begin{proof}
The furthermore part follows directly from the definition of $\mathcal{F}$. We prove $\mathcal{F}$ is feasible for $I$.

Consider a terminal $t$. If $t\in V(T)$, then $t\in V(T_{p_j})$ for some $1\leq j\leq q$ (recall the vertices in $T$ is the union of the vertices in all the partial arborescences $T_{p_j}$'s) so $t$ is reachable from $r_{p_j}$, the root of $T_{p_j}$, in $\mathcal{F}$. Suppose $t\in C_{i}$ for some $1\leq i\leq h$. If $t$ is reachable from a root other than $r_T$ in $\mathcal{F}_{C_{i}}$ then we are done because the same dipath exists in $\mathcal{F}$. So we suppose not and let $P$ be the dipath in $\mathcal{F}_{C_{i}}$ from $r_T$ to $t$. Let $(u,v)$ be the corresponding edge to $(r_T,v)$ in $G$. Note that $u\in V(T_{p_j})$ for some $1\leq j\leq q$ because $\delta(C_s,C_{s'})=\emptyset$ for $1\leq s\neq s'\leq h$. Hence, $t$ is reachable from $r_{p_j}$, the root of $T_{p_j}$, in $\mathcal{F}$ as $E(T_{p_j})\subseteq \mathcal{F}$.
\end{proof}

Given an instance $I$ with roots $r_1,\ldots,r_R$, temporarily add an auxiliary node $r$ and add edges $(r,r_i)$ for all $1\leq i\leq R$ with zero cost (it might destroy the planarity). Run the BFS algorithm as usual rooted at $r$. Then, remove $r$ and all the edges incident to $r$. The result is a vertex-disjoint BFS arborescences $A_1,A_2,\ldots,A_R$ rooted at $r_1,\ldots,r_R$. Note that for every $v\in V(A_i)$, $v$ is closest to $r_i$ than any other roots, i.e., the dipath from $r_i$ to $v$ has cost $d\big(\{r_1,\ldots,r_R\},v\big)$.

Finally, we present the separator result for the multi-rooted case.

\begin{lemma}[A structured separator, multi-rooted]\label{lem: di-separator small numb of subinstances, multi-rooted}
Let $I=(G=(V,E),c,\{r_1,\ldots,r_R\},X)$ be an instance of multi-rooted DST on planar graphs, and let $A_1,\ldots,A_R$ be the vertex-disjoint BFS arborescence rooted at $r_1,\ldots,r_R$. There is a multi-rooted partial arborescence $T=F\cup(\bigcup\limits_{j=1}^R T_{i_j})$, where $T_{i_j}$ could possibly be empty (i.e., with no vertices) such that the following hold:
\begin{itemize}
    \item[(a)] $T_j$ is either empty or is a subtree of $A_j$ rooted at $r_j$ that consists of the union of up to four shortest dipaths each starting at $r_j$.  
    \item[(b)] Let $C_1,\cdots,C_h$ be the weakly connected components of $G\setminus T$. Then, each subinstance $I_{C_i}$ induced by $(G,T,C_1,\ldots,C_h)$ has at most $\frac{|X|}{2}$ terminals for $1\leq i\leq h$.
   \item[(c)] Let $\mathcal{F}_i$ be a solution to subinstance $I_{C_i}$ for $1\leq i \leq h$. Then, the corresponding solution $(E(T)\setminus F)\cup (\bigcup\limits_{i=1}^h\mathcal{F}_i)$ in $G$ is feasible for $I$ with cost exactly $\cost(T\setminus F)+\sum\limits_{i=1}^h\cost(\mathcal{F}_i)$.
\end{itemize}
\end{lemma}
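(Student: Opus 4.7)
My plan is to reduce to the single-root setting of Corollary \ref{cor: di-separator}. I will augment $G$ with an auxiliary super-root $r^*$ connected to every $r_i$ by a zero-cost directed path. To preserve planarity, each such path is routed through the faces of a fixed planar embedding of $G$, subdividing any crossed edge by a weight-zero auxiliary vertex. Since all paths emanate from the single point $r^*$, they can be chosen pairwise non-crossing, so the augmented digraph $G^*$ is planar. A BFS arborescence of $G^*$ rooted at $r^*$ first traverses a zero-cost ``trunk'' reaching each $r_i$ and then continues along (a subgraph of) the original $A_i$.

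Next I apply Corollary \ref{cor: di-separator} to $G^*$ with vertex weights (terminals $=1$, all other vertices $=0$), obtaining three shortest $r^*$-dipaths $Q_1,Q_2,Q_3$ whose removal leaves every weakly connected component of $G^*$ with at most $|X|/2$ terminals. Each $Q_\ell$ runs along its zero-cost prefix until it first meets some root $r_{j_\ell}$, after which its remainder is a shortest dipath in $A_{j_\ell}$ beginning at $r_{j_\ell}$. For each $j$ I let $T_j$ be the union (as a subtree of $A_j$) of those remainders that start at $r_j$. Since there are only three such remainders in total, each $T_j$ is a subtree of $A_j$ formed by at most three shortest dipaths from $r_j$---comfortably within the allowed four. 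I set $F=\emptyset$, so $T=\bigcup_j T_j$.

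Property (a) is then immediate; (c) follows directly from Lemma \ref{lem: merger soln multi-rooted} applied with this $T$ and $F=\emptyset$; and (b) holds because $G\subseteq G^*$ forces every weakly connected component of $G\setminus T$ to be contained in a weakly connected component of $G^*\setminus(Q_1\cup Q_2\cup Q_3)$, hence to contain at most $|X|/2$ terminals. The main technical obstacle is the planarization: one must verify carefully that the routed zero-cost connections (with their subdivisions) do produce a planar graph whose BFS arborescence from $r^*$ faithfully captures the shortest-dipath structure of the $A_i$'s, and that the zero-cost prefix of each $Q_\ell$ actually terminates at some real root $r_{j_\ell}$ rather than dying inside the auxiliary trunk (this can be enforced by choosing the trunk as thin as possible and by the zero weight assignment). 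If a clean planarization is not available, an alternative is to adapt Thorup's cycle-in-triangulation argument directly to the multi-rooted forest $A_1\cup\cdots\cup A_R$, which is presumably where the ``up to four dipaths per root'' slack in the statement is used.
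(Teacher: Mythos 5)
Your main construction has a genuine gap that the paper explicitly warns about: adding an auxiliary super-root $r^*$ with zero-cost arcs to each $r_i$ and then ``planarizing'' by routing through faces and subdividing crossed edges does not preserve the structure you need. When the auxiliary $r^*$--$r_i$ path crosses an original edge $(u,v)$, the subdivision vertex $w$ becomes reachable from $r^*$ at cost~$0$, and from $w$ the vertices $u$ and $v$ become reachable at a fraction of the cost of the subdivided edge --- far less than their true distance from any root. The BFS arborescence of $G^*$ rooted at $r^*$ therefore does \emph{not} decompose into a zero-cost trunk followed by subtrees of the original $A_i$'s; instead it takes shortcuts through subdivision vertices that have no counterpart in $G$, the ``remainders'' of the $Q_\ell$'s need not be shortest dipaths from the roots in $A_{j_\ell}$, and deleting $T$ from $G$ need not disconnect components the way deleting $Q_1\cup Q_2\cup Q_3$ disconnects $G^*$. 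In short, the verification you flag as ``the main technical obstacle'' would fail, not succeed; there is in general no way to plant $r^*$ and route to all $R$ roots inside a planar embedding without crossings, and the paper says as much when it notes that the standard super-root reduction does not preserve planarity. A secondary symptom that you are not on the statement's intended track: your construction would only ever produce three dipaths per root (one remainder per $Q_\ell$), and you would never set $F$ nonempty, whereas the bound ``up to four'' and the role of $F$ in the definition of multi-rooted partial arborescence are both essential artifacts of the correct argument.

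Your closing alternative is the right idea, and it is what the paper actually does. Rather than modify $G$, the paper builds the $A_i$'s via a virtual super-root used only to run BFS, then extends the forest $\bigcup_i A_i$ to a spanning tree $T'$ of the underlying \emph{undirected} planar graph by adding a set $F'$ of arbitrary connecting edges, and applies Thorup's undirected separator (Theorem~\ref{thm: separator}) to $G$ with spanning tree $T'$ rooted at $r_1$. The three resulting \emph{undirected} paths $P_1,P_2,P_3$ from $r_1$ may traverse $F'$-edges against their orientation, but each $P_j\cap A_i$ is a contiguous subpath of $A_i$ (otherwise $T'$ has a cycle), and moreover all three paths first enter $A_i$ at a common vertex. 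Marking the first-entry and last-exit vertices of each $P_j$ on $A_i$ yields at most four marked vertices per $A_i$, each $T_i$ is taken to be the union of the shortest dipaths in $A_i$ from $r_i$ to those marks, and $F$ is taken to be the $F'$-edges used by $P_1\cup P_2\cup P_3$ (purchased structurally but not paid for, since part (c) charges only $\cost(T\setminus F)$). That is precisely where the slack to four dipaths and the need for a nonempty $F$ come from. You would need to carry out this Thorup-on-a-forest argument to close the gap.
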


\begin{proof}
Figure \ref{fig2} helps to visualize this proof. 

Since $G$ is weakly connected, there is a subset of edges $F'$ in $G$ such that $T':=F'\cup(\bigcup_{i=1}^R A_i)$ is a multi-rooted arborescence of $G$ (spanning all the vertices) and the endpoints of edges in $F$ are in $\bigcup\limits_{i=1}^R V(A_i)$. Make $T'$ rooted at an arbitrarily chosen root, say $r_1$. Apply Theorem \ref{thm: separator} with terminal vertices having weight $1$ and the rest of vertices having weight $0$, and $T'$ as the spanning tree (in the undirected sense). This gives three paths $P_1,P_2$, and $P_3$ in $T'$ each with starting vertex $r_1$ such that every weakly connected component $C_i$ of $G\setminus(P_1\cup P_2\cup P_3)$ has at most $\frac{|X|}{2}$ terminals for $1\leq i \leq h$. Note, these three paths do not necessarily follow the directions of the edges.

Fix $A_i$ for some $1\leq i\leq R$ and a path $P_j:=(r_1=v_1),v_2,\ldots,v_N$ for $1\leq j\leq 3$. Let $a$ and $b$ (possibly $a=b$) be the smallest and the largest indices, respectively, such that $v_a$ and $v_b$ are in $V(A_i)$. We claim the subpath $P_{[a,b]}:=v_a,v_{a+1},\ldots,v_b$ is a subgraph of $A_i$. Suppose not, so there must be two indices $a\leq a'< b'\leq b$ such that $v_{a'},v_{b'}\in V(A_i)$ and $v_{a'+1},v_{a'+2},\ldots,v_{b'-1}\notin V(A_i)$. Let $P^{a'}_{A_i}$ and $P^{b'}_{A_i}$ be the paths from $r_i$ to $a'$ and $b'$ in $V(A_i)$, respectively. So $P^{a'}_{A_i}\cup P^{b'}_{A_i} \cup P_{[a',b']}$ forms a cycle in $T'$, a contradiction. Furthermore, for $j=1,2,3$ let $v_j$ be the closest vertex to $r_1$ on $P_j$ (in terms of edge hops) that is in $A_i$ as well (if exists). Then, $v_1=v_2=v_3$ as otherwise we have a cycle in $T'$ because all $P_j$'s start at $r_1$.

For each $1\leq i \leq R$ and $1\leq j\leq 3$, we {\bf mark} the nodes with smallest and largest indices in $P_j$ that are in $A_i$. We proved above, that the number of these marked vertices in each $A_i$ is at most $4$. Furthermore, $(P_1\cup P_2\cup P_3)\cap A_i$ is a subgraph of the union of dipaths from $r_i$ to each marked vertices in $A_i$ for all $1\leq i \leq h$.

We construct our partial multi-rooted arborescence $T$ as follows: let $T_i$ be the union of (up to four) shortest dipaths from $r_i$ to the marked vertices in $A_i$. Let $F:=E\big(P_1\cup P_2\cup P_3\big)\setminus (\bigcup_{i=1}^R E(T_i))$ which is the subset of edges whose endpoints are in different $V(A_i)$'s, i.e., $F\subseteq F'$. Let $T:=F\cup(\bigcup_{i=1}^R T_i)$. Note that for $A_i$'s with no marked vertices, $T_i$ is empty (with no vertices not even $r_i$). Since $T$ is a partial multi-rooted arborescence that contains $P_1\cup P_2\cup P_3$ as a subgraph, every weakly connected components of $G\setminus T$ has at most $\frac{|X|}{2}$ terminals. This finishes the proof of parts (a) and (b).

Property (c) follows from Lemma \ref{lem: merger soln multi-rooted} and the fact that the conditions in Lemma \ref{lem: merger soln multi-rooted} are satisfied.
\end{proof}

\begin{figure}
    \centering
    \includegraphics[scale=0.7]{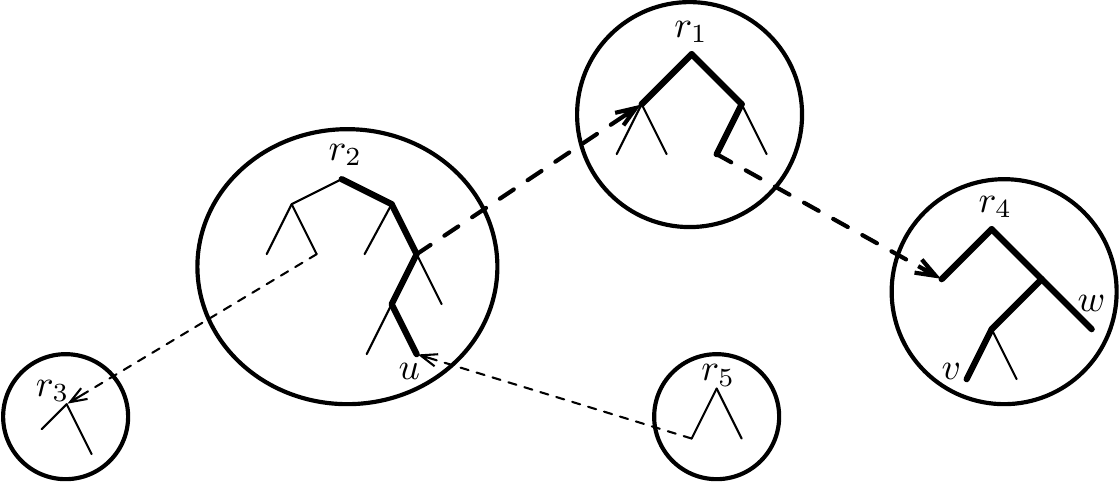}
    \caption{A depiction of the multirooted separator in an instance with $R = 5$ roots. The solid edges (thick and thin) are the shortest-path arborescences $A_i$ for $i = 1, \ldots, R$. The dashed edges are $F'$, they exist solely to allow us to apply Theorem \ref{thm: separator} starting from a spanning tree of the underlying undirected graph and to witness the contraction of all vertices on the thick edges results in a planar graph. After applying Theorem \ref{thm: separator}, we get three vertices depicted as $u,v,w$. The vertices touching the thick and solid edges then form the multirooted separator: these include all vertices lying on paths from $r$ to $u,v,$ or $w$ (as in Theorem \ref{thm: separator}). Additionally, for each $i = 1, \ldots, R$ that includes at least one node from some $r_1-a$ path for some $a \in \{u,v,w\}$, the multirooted separator includes vertices on the unique path connecting $r_i$ to the $r-a$ path (eg. the path from $r_2$ to the $r_1-u$ path).\\
    In the algorithm, the solution will purchase the thick solid edges, but not the thick dashed edges. However, we do contract all thick edges (dashed and solid) to generate the subproblems: the number of roots also drops by 2 since the separator touches 3 shortest-path arborescences. Any solution that is connected from the new contracted root will be connected from either $r_1, r_2$ or $r_4$ using the thick and solid edges after uncontracting.}
    \label{fig2}
\end{figure}

The algorithm for the multi-rooted version is the same as Algorithm \ref{alg: log k-approx polytime} with the following two tweaks: (1) in the preprocessing step we remove vertices $v$ where $d\big(\{r_1,\ldots,r_R\},v\big)>\widetilde{\opt}$, and (2) instead of Corollary \ref{cor: di-separator} we apply Lemma \ref{lem: di-separator small numb of subinstances, multi-rooted} to obtain the subinstances.

Next, we analyze the cost and the running time of this algorithm.

\begin{lemma}[Cost and running time, multi-rooted]\label{lem: induction cost, multi-rooted}
Consider an instance $I=\big(G=(V,E),w,\{r_1,\ldots,r_R\},X\big)$ and a pair $(I,\widetilde{\opt})$. Let $\ell$ and $\smallo$ be non-negative integers such that $|X|\leq 2^{\ell}$ and $\widetilde{\opt}\leq 2^{\smallo}$. If $\widetilde{\opt}\geq \opt$ where $\opt$ is the optimal value of $I$, then $\dst(I,\widetilde{\opt})\leq \big(8\cdot(R+ \ell)+1\big)\cdot\opt$ and the number of recursive calls is at most $|X|\cdot 2^{2\cdot\ell+\smallo}$.
\end{lemma}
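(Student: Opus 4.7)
The plan is to mirror the structure of Lemma \ref{lem: induction cost}, performing double induction on $\ell + \smallo$ but using the multi-rooted separator (Lemma \ref{lem: di-separator small numb of subinstances, multi-rooted}) and merging result (Lemma \ref{lem: merger soln multi-rooted}) in place of their single-rooted analogues. First I would dispatch the easy cases: the base case $\ell = 0$ (one terminal) is solved exactly by a single shortest-path computation from the closest root, giving cost $\opt \leq (8R+1)\opt$; and the case $\widetilde{\opt} > 2\opt$ is handled identically to the single-rooted proof by inducting on $\smallo$ through the recursive call $\dst(I, \widetilde{\opt}/2)$.

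For the main case $\widetilde{\opt} \leq 2\opt$ and $\ell \geq 1$, I would apply Lemma \ref{lem: di-separator small numb of subinstances, multi-rooted} to obtain a multi-rooted partial arborescence $T = F \cup \bigcup_{j=1}^{R} T_{i_j}$, and let $q \geq 1$ denote the number of non-empty $T_{i_j}$'s (the arborescence rooted at the starting vertex $r_1$ used in the proof of the separator lemma is always non-empty, so $q \geq 1$). Since the preprocessing step bounds every shortest dipath starting at any root by $\widetilde{\opt}$ and each non-empty $T_{i_j}$ is the union of at most four such dipaths, one obtains
\[
\cost(T \setminus F) \;\leq\; 4q\widetilde{\opt} \;\leq\; 8q \cdot \opt.
\]

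The crucial new step is the accounting of roots across subinstances. By Definition \ref{def: induced subinstances mluti-root}, the root set of each $I_{C_i}$ is $\{r_T\} \cup \big(C_i \cap (\{r_1, \ldots, r_R\} \setminus \{r_{i_1}, \ldots, r_{i_q}\})\big)$, so the subinstance root count satisfies $R_i \leq 1 + (R - q) = R - q + 1$. Applying the induction hypothesis to each subinstance (which has at most $2^{\ell-1}$ terminals and $\opt_{I_{C_i}} \leq \opt \leq \widetilde{\opt}$) yields $\cost(\mathcal{F}'_i) \leq (8(R_i + \ell - 1) + 1)\opt_{I_{C_i}} \leq (8(R-q+\ell)+1)\opt_{I_{C_i}}$. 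Summing over $i$, using $\sum_i \opt_{I_{C_i}} \leq \opt$ (restrictions of $\OPT$ to the edge-disjoint $G_{C_i}$'s are feasible for the subinstances), and combining via Lemma \ref{lem: merger soln multi-rooted} gives
\[
\cost(\mathcal{F}) \;\leq\; 8q \cdot \opt + (8(R-q+\ell)+1)\opt \;=\; (8(R+\ell)+1)\opt,
\]
in which the $\pm 8q$ terms telescope exactly. The recursion-count bound $|X| \cdot 2^{2\ell + \smallo}$ follows from the identical induction as in Lemma \ref{lem: induction cost}, since the recurrence $R(\ell, \smallo) \leq 1 + R(\ell, \smallo-1) + \sum_i R(\ell_i, \smallo)$ with $\ell_i \leq \ell - 1$ and $\sum_i |X_i| \leq |X|$ is unchanged in the multi-rooted setting.

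The main obstacle is precisely this root-accounting step: a naive bound $R_i \leq R$ would blow up into an $O(R \log k)$ approximation, whereas the tighter $R_i \leq R - q + 1$, which holds because the contracted vertex $r_T$ replaces the $q$ absorbed original roots by a single root, is exactly what allows the $8q \cdot \opt$ separator cost to cancel against the decrease in the inductive coefficient of each subinstance, yielding the desired additive $O(R + \log k)$ approximation factor.
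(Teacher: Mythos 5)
Your proof is correct and mirrors the paper's argument essentially step for step: bound the separator cost by $8q\cdot\opt$ via the preprocessing step and the four-dipath structure, observe that each subinstance has at most $R-q+1$ roots and at most $2^{\ell-1}$ terminals, apply the induction hypothesis, and let the $8q\cdot\opt$ terms telescope (the paper writes $R'$ where you write $q$, and inducts on $R+\ell+\smallo$ rather than $\ell+\smallo$, but both measures decrease under the two recursion types). Your explicit observation that $q\geq 1$ because $r_1$ is always a marked vertex of $A_1$ is a small point the paper leaves implicit, and it is indeed needed for the root-accounting to go through.
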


\begin{proof}
The proof of the number of recursive calls is exactly the same as in the proof of Lemma \ref{lem: induction cost}. So we turn to proving the bound on the returned solution's cost.

The proof is by induction on $R+\ell+\smallo$. As in the proof of Lemma \ref{lem: induction cost}, we only need to focus on the case that $\widetilde{\opt}\leq 2\cdot\opt$ and show that $\cost(\mathcal{F}_2)\leq \big(8\cdot(R+ \ell)+1\big)\cdot\opt$.

Let $T=F\cup(\bigcup\limits_{i=1}^R T_i)$ be the partial multi-rooted arborescence obtained from Lemma \ref{lem: di-separator small numb of subinstances, multi-rooted}. Suppose $T$ contains $R'$ many of the roots. Then, exactly $R'$ many of $T_i$'s are non-empty. By Lemma \ref{lem: di-separator small numb of subinstances, multi-rooted} (a) we have that each non-empty $T_i$ is consists of up to four shortest dipaths rooted at $r_i$ so $\cost(T_i)\leq 4\cdot\widetilde{\opt}$ because of the preprocessing step plus the fact that $\widetilde{\opt}\leq 2\cdot\opt$, we conclude
\begin{equation}\label{eq: cost separator, multi-rooted}
    \cost(T\setminus F)\leq 8\cdot R'\cdot\opt.
\end{equation}

Since $T$ contains $R'$ many roots, each subinstance $I_{C_i}$ induced by $(G,T,C_1,\ldots,C_h)$ has at most $R-R'+1$ many roots for $1\leq i\leq h$. Furthermore, by Lemma \ref{lem: di-separator small numb of subinstances, multi-rooted} (b) each $I_{C_i}$'s has at most $\frac{|X|}{2}\leq 2^{\ell-1}$ many terminals. So by induction hypothesis, for $i=1,\ldots,h$ we have
\begin{equation}\label{eq: cost sub instance, multi-rooted}
    \cost(\mathcal{F}_{C_i})\leq \Big(8\cdot\big((R-R'+1)+\ell-1\big)+1\Big)\cdot\opt_{I_{C_i}}\leq \big(8\cdot(R-R'+\ell)+1\big)\cdot\opt_{I_{C_{i}}}.
\end{equation}
Using Lemma \ref{lem: di-separator small numb of subinstances, multi-rooted} (c), the bounds in \eqref{eq: cost separator, multi-rooted} and \eqref{eq: cost sub instance, multi-rooted} we have

\begin{align*}
    \cost(\mathcal{F})&\leq \cost(T\setminus F)+\sum\limits_{i=1}^h\cost(\mathcal{F}_{I_{C_i}})\\
    &\leq 8\cdot R'\cdot\opt+\big(8\cdot(R-R'+\ell)+1\big)\cdot\sum\limits_{i=1}^h\opt_{I_{C_i}}\\
    &\leq 8\cdot R'\cdot\opt+\big(8\cdot(R-R'+\ell)+1\big)\cdot\opt\\
    &= \big(8\cdot(R+\ell)+1\big)\cdot\opt,
\end{align*}
where the third inequality follows from the fact that $\sum\limits_{i=1}^h\opt_{I_{C_i}}\leq\opt$ as the restriction of $\OPT$ on each $G_{C_i}$ is a feasible solution for $I_{C_i}$ and $G_{C_i}$'s are edge-disjoint..
\end{proof}

\begin{pproof}{Theorem \ref{thm: multi-rooted planar DST}}
Note both of the tweaks in Algorithm \ref{alg: log k-approx polytime} are implementable in polynomial time. The proof has exactly the same structure as in the proof of Theorem \ref{thm: planar DST} with the difference that we use Lemma \ref{lem: induction cost, multi-rooted} here instead of Lemma \ref{lem: induction cost}.
\end{pproof}

\section{Concluding Remarks}

One possible direction is to extend our result to minor-free families of graphs. However, as pointed out in \cite{cohen2022bypassing,abraham2006object}, minor-free (undirected) graphs do not have shortest-path separators. In \cite{cohen2022bypassing}, Cohen-Addad bypassed this difficulty by designing a new separator called a {\em mixed separator} for undirected minor-free graphs.
It is not clear that analogous separators exist in directed graphs.
For example, the mixed separators in \cite{cohen2022bypassing} are obtained, in part, by contracting certain paths. These paths are obtained using structural results in minor-free graphs \cite{robertson2003graph} and it is not clear how to find analogous paths in the directed case.
Obtaining an $O(\log k)$-approximation for DST in minor-free graphs remains an interesting open problem.

\bibliographystyle{alpha}
\bibliography{references}

\newcommand{\etalchar}[1]{$^{#1}$}
\begin{thebibliography}{CCC{\etalchar{+}}99}

\bibitem[AG06]{abraham2006object}
Ittai Abraham and Cyril Gavoille.
\newblock Object location using path separators.
\newblock In {\em Proceedings of the twenty-fifth annual ACM symposium on
  Principles of distributed computing}, pages 188--197, 2006.

\bibitem[BGRS13]{byrka2013steiner}
Jaros{\l}aw Byrka, Fabrizio Grandoni, Thomas Rothvo{\ss}, and Laura Sanit{\`a}.
\newblock Steiner tree approximation via iterative randomized rounding.
\newblock {\em Journal of the ACM (JACM)}, 60(1):1--33, 2013.

\bibitem[BHM11]{bateni2011approximation}
MohammadHossein Bateni, MohammadTaghi Hajiaghayi, and D{\'a}niel Marx.
\newblock Approximation schemes for steiner forest on planar graphs and graphs
  of bounded treewidth.
\newblock {\em Journal of the ACM (JACM)}, 58(5):1--37, 2011.

\bibitem[BKM09]{borradaile2009n}
Glencora Borradaile, Philip Klein, and Claire Mathieu.
\newblock An {$O(n \log n)$} approximation scheme for steiner tree in planar
  graphs.
\newblock {\em ACM Transactions on Algorithms (TALG)}, 5(3):1--31, 2009.

\bibitem[BP89]{bern1989steiner}
Marshall Bern and Paul Plassmann.
\newblock The steiner problem with edge lengths 1 and 2.
\newblock {\em Information Processing Letters}, 32(4):171--176, 1989.

\bibitem[CA22]{cohen2022bypassing}
Vincent Cohen-Addad.
\newblock Bypassing the surface embedding: approximation schemes for network
  design in minor-free graphs.
\newblock In {\em Proceedings of the 54th Annual ACM SIGACT Symposium on Theory
  of Computing}, pages 343--356, 2022.

\bibitem[CCC{\etalchar{+}}99]{charikar1999approximation}
Moses Charikar, Chandra Chekuri, To-Yat Cheung, Zuo Dai, Ashish Goel, Sudipto
  Guha, and Ming Li.
\newblock Approximation algorithms for directed steiner problems.
\newblock {\em Journal of Algorithms}, 33(1):73--91, 1999.

\bibitem[CZ05]{calinescu}
Gruia Calinescu and Alexander Zelikovsky.
\newblock The polymatroid steiner problems.
\newblock {\em J. Combonatorial Optimization}, 33(3):281--294, 2005.

\bibitem[DHK14]{demaine2014node}
Erik~D Demaine, MohammadTaghi Hajiaghayi, and Philip~N Klein.
\newblock Node-weighted steiner tree and group steiner tree in planar graphs.
\newblock {\em ACM Transactions on Algorithms (TALG)}, 10(3):1--20, 2014.

\bibitem[FM21]{friggstad2021constant}
Zachary Friggstad and Ramin Mousavi.
\newblock A constant-factor approximation for quasi-bipartite directed steiner
  tree on minor-free graphs.
\newblock {\em arXiv preprint arXiv:2111.02572}, 2021.

\bibitem[GLL19]{grandoni2019log2}
Fabrizio Grandoni, Bundit Laekhanukit, and Shi Li.
\newblock {$O(\log2 k/\log \log k)$}-approximation algorithm for directed
  steiner tree: a tight quasi-polynomial-time algorithm.
\newblock In {\em Proceedings of the 51st Annual ACM SIGACT Symposium on Theory
  of Computing}, pages 253--264, 2019.

\bibitem[HK03]{halperin2003polylogarithmic}
Eran Halperin and Robert Krauthgamer.
\newblock Polylogarithmic inapproximability.
\newblock In {\em Proceedings of the thirty-fifth annual ACM symposium on
  Theory of computing}, pages 585--594, 2003.

\bibitem[KZ97]{karpinski1997new}
Marek Karpinski and Alexander Zelikovsky.
\newblock New approximation algorithms for the steiner tree problems.
\newblock {\em Journal of Combinatorial Optimization}, 1(1):47--65, 1997.

\bibitem[LT79]{lipton1979separator}
Richard~J Lipton and Robert~Endre Tarjan.
\newblock A separator theorem for planar graphs.
\newblock {\em SIAM Journal on Applied Mathematics}, 36(2):177--189, 1979.

\bibitem[LT80]{lipton1980applications}
Richard~J Lipton and Robert~Endre Tarjan.
\newblock Applications of a planar separator theorem.
\newblock {\em SIAM journal on computing}, 9(3):615--627, 1980.

\bibitem[PS00]{promel2000new}
Hans~J{\"u}rgen Pr{\"o}mel and Angelika Steger.
\newblock A new approximation algorithm for the steiner tree problem with
  performance ratio 5/3.
\newblock {\em Journal of Algorithms}, 36(1):89--101, 2000.

\bibitem[RS03]{robertson2003graph}
Neil Robertson and Paul~D Seymour.
\newblock Graph minors. xvi. excluding a non-planar graph.
\newblock {\em Journal of Combinatorial Theory, Series B}, 89(1):43--76, 2003.

\bibitem[RZ05]{robins2005tighter}
Gabriel Robins and Alexander Zelikovsky.
\newblock Tighter bounds for graph steiner tree approximation.
\newblock {\em SIAM Journal on Discrete Mathematics}, 19(1):122--134, 2005.

\bibitem[Tho04]{thorup2004compact}
Mikkel Thorup.
\newblock Compact oracles for reachability and approximate distances in planar
  digraphs.
\newblock {\em Journal of the ACM (JACM)}, 51(6):993--1024, 2004.

\bibitem[Zel93]{zelikovsky199311}
Alexander~Z Zelikovsky.
\newblock An 11/6-approximation algorithm for the network steiner problem.
\newblock {\em Algorithmica}, 9(5):463--470, 1993.

\bibitem[Zel97]{zelikovsky1997series}
Alexander Zelikovsky.
\newblock A series of approximation algorithms for the acyclic directed steiner
  tree problem.
\newblock {\em Algorithmica}, 18(1):99--110, 1997.

\end{thebibliography}

\appendix
\section{Proof of Lemma \ref{lem: poly bounded dist}}\label{app:scale}

\begin{proof}
Let $\Delta:=\max_{t\in X}\Big\{d\big(\{r_1,\ldots ,r_R\},t\big)\Big\}$, i.e., $\Delta$ is the maximum distance from any root to a terminal. Let $\opt_I$ be the optimal value of instance $I$. Then, $\Delta\leq \opt_I\leq k\cdot\Delta$.

If $\Delta=0$, then $\opt_I=0$ and the collection of all shortest dipaths from the roots to the terminals is a solution of cost $0$. So we assume $\Delta>0$.

We can safely remove any edge $e$ having $c_e>k\cdot\Delta$ and any Steiner node $v$ (along with its incident edges) having $d(\{r_1,\ldots ,r_R\},v)>k\cdot\Delta$ since no optimal solution of $I$ uses $e$ or $v$. Since we have only deleted
elements of $G$, it remains planar.

Define a new edge costs $c'_e:=\lceil c_e\cdot\frac{n}{\epsilon\cdot\Delta} \rceil$ and form the instance $I'=(G,c',\{r_1,\ldots,r_R\},X)$.
Note for any shortest dipath $P$ starting at root $r_i$, we have 
\[
\cost_{c'}(P)\leq \sum\limits_{e\in P}c'_e\leq \sum\limits_{e\in P}(c_e\cdot\frac{n}{\epsilon\cdot\Delta}+1)\leq \cost_{c}(P)\cdot\frac{n}{\epsilon\cdot\Delta}+n\leq \frac{n \cdot k}{\epsilon} + n,
\]
where the last inequality follows because all the distances from the root has length at most $k\cdot\Delta$. So all the shortest dipaths starting at $r$ in $I'$ are bounded by $O(\frac{n^2}{\epsilon})$.

Let $\opt_{I'}$ be the optimal value of instance $I'$. Similar calculation as before shows $\opt_{I'}\leq \frac{n}{\epsilon\cdot\Delta}\cdot\opt_I+n$.

Let $F$ be an $\alpha$-approximate solution for $I'$. Then, we have
\begin{align*}
\cost_c(F)&\leq \frac{\epsilon\cdot\Delta}{n}\cdot\cost_{c'}(F)\\
&\leq \frac{\epsilon\cdot\Delta}{n}\cdot\alpha\cdot\opt_{I'}\\
&\leq \frac{\epsilon\cdot\Delta}{n}\cdot\alpha\cdot(\frac{n}{\epsilon\cdot\Delta}\cdot\opt_I+n)\\
&\leq \alpha\cdot\opt_I+\alpha\cdot\epsilon\cdot\Delta\\
&\leq \alpha\cdot(1+\epsilon)\cdot\opt_I,
\end{align*}
where the first inequality follows because $c'_e\geq c_e\cdot\frac{n}{\epsilon\cdot\Delta}$ and the last because $\opt_I \geq \Delta$.
\end{proof}

\end{document}